\newtheorem{defn}{Definition}
\newtheorem{oq}{Open Question}
\newtheorem{prop}{Proposition}
\newtheorem{lem}{Lemma}
\newtheorem{thm}{Theorem}
\DeclareMathOperator{\poly}{poly}
\DeclareMathOperator{\polylog}{polylog}
\newcommand{\E}{\mathop{\textbf{E}}}
\title{Breaking Barriers for Distributed MIS by Faster Degree Reduction}
\author{
  Seri Khoury\thanks{INSAIT, Sofia University “St. Kliment Ohridski”. \texttt{seri.khoury@insait.ai}} 
  \and 
  Aaron Schild\thanks{Google Research. \texttt{aaron.schild@gmail.com}}
}
\date{}
\begin{document}
\maketitle
\begin{center}

\end{center}
\begin{abstract}

We study the problem of finding a maximal independent set (MIS) in the standard LOCAL model of distributed computing. Classical algorithms by Luby [JACM'86] and Alon, Babai, and Itai [JALG'86] find an MIS in \(O(\log n)\) rounds in \(n\)-node graphs with high probability. Despite decades of research, the existence of \emph{any} \(o(\log n)\)-round algorithm for general graphs remains one of the major open problems in the field.

Interestingly, the hard instances for this problem must contain constant-length cycles. This is because there exists a sublogarithmic-round algorithm for graphs with super-constant girth; i.e., graphs where the length of the shortest cycle is \( \omega(1) \), as shown by Ghaffari~[SODA'16]. Thus, resolving this \(\approx 40\)-year-old open problem requires understanding the family of graphs that contain \(k\)-cycles for some constant \(k\).

In this work, we come very close to resolving this \(\approx 40\)-year-old open problem by presenting a sublogarithmic-round algorithm for graphs that can contain \(k\)-cycles for all \(k > 6\). Specifically, our algorithm finds an MIS in 
\(
O\left(\frac{\log \Delta}{\log(\log^* \Delta)} + \mathrm{poly}(\log\log n)\right)
\)
rounds, as long as the graph does not contain cycles of length \(\leq 6\), where $\Delta$ is the maximum degree of the graph. As a result, we push the limit on the girth of graphs that admit sublogarithmic-round algorithms from $k = \omega(1)$ all the way down to a small constant $k=7$. 
 Moreover, our result has the following further implications:

\begin{itemize}
    \item \textbf{Refuting a conjecture about MIS in trees:} By combining our algorithm with a low-arboricity-to-low-degree reduction by Barenboim, Elkin, Pettie, and Schneider [JACM'16], we achieve an
\(
O\left(\sqrt{\frac{\log n}{\log(\log^* n)}}\right)
\)
-round algorithm in trees. This refutes a conjecture in the book by Barenboim and Elkin that finding an MIS in trees requires \(\Theta(\sqrt{\log n})\) rounds. 
    \item \textbf{Separating MIS from Maximal Matching in trees:} Together with a very recent work that shows a $\Omega(\sqrt{\log n})$ lower bound for Maximal Matching (MM) in trees, our result implies a surprising and counterintuitive separation between MIS and MM in trees. While MM can only be easier than MIS in general graphs, it becomes strictly \emph{harder} in trees. This also implies that MIS itself is strictly harder to solve in general graphs than in trees.
\end{itemize}

\end{abstract}

\newpage
\section{Introduction}

One of the earliest and most extensively studied problems in distributed graph algorithms is the problem of finding a maximal independent set (MIS)\cite{Luby86,AlonBI86,RozhonG20,2013Barenboim,BalliuBHORS21,Ghaffari16,BarenboimEPS16,KuhnMW16,BarenboimE10,Linial92,Naor91,0001GHIR23,0001G23,GhaffariP19,ecomposition,BalliuBO22,Balliu0KO22,Balliu0KO21,LenzenW11,MetivierRSZ11,wattenhofer2020mastering,SchneiderW08}. Its origins can be traced back to the 1980s, coinciding with the introduction of the standard LOCAL model~\cite{Linial92,Luby86,AlonBI86}. In this model, the input graph is abstracted as a network of $n$ nodes that can communicate with each other via synchronized rounds. In each round, each node can send unbounded-size messages to its neighbors and perform some local computation. The goal is for the nodes to collaboratively solve a graph problem (e.g., find an MIS) while minimizing the number of communication rounds.

Classical algorithms by Luby~\cite{Luby86} and Alon, Babai, and Itai~\cite{AlonBI86} find an MIS in \( O(\log n) \) rounds in the LOCAL model with high probability.\footnote{Throughout the paper, we say that an algorithm succeeds with high probability if it succeeds with probability. \( 1 - 1/n^c \) for arbitrarily large constant \( c > 1 \).}  Remarkably, despite decades of research, these algorithms remain the best known for general graphs. The question of whether there is a sublogarithmic-round algorithm remains one of the major and earliest open questions in the field~\cite{2013Barenboim,Ghaffari16,BarenboimEPS16,HarrisSS18,wattenhofer_podc,abs-2406-19430}.

This question is pointed out in the book by Barenboim and Elkin~\cite{2013Barenboim}, as well as in lecture notes by Wattenhofer, who states that ``one of the main open problems in distributed computing is whether one can beat this logarithmic time..."~\cite{wattenhofer_podc}. It is further highlighted in several seminal papers in the field. Barenboim, Elkin, Pettie, and Schneider~\cite{BarenboimEPS16} stress that ``the most difficult problem is to find optimal... algorithms for MIS and maximal matching, or, as a first step, \emph{any} \( o(\log n) \) time algorithm". It was also mentioned in the first sublogarithmic-round algorithm for distributed coloring by Harris, Schneider, and Su~\cite{HarrisSS18}, as well as implicitly by Ghaffari~\cite{Ghaffari16}, who mentions that the gap between the upper and lower bounds for MIS is ``perhaps the most interesting."


\begin{oq}(\cite{2013Barenboim,BarenboimEPS16,Ghaffari16,HarrisSS18,wattenhofer_podc,abs-2406-19430})\label{oq:main}\label{trees}
    Is there a randomized distributed algorithm that finds an MIS in $o(\log n)$ rounds with high probability?
\end{oq}

On the other hand, until very recently, the best-known lower bound for randomized algorithms as a function of \( n \) was the \(21\)-year-old \(\Omega\left(\sqrt{\frac{\log n}{\log \log n}}\right)\) lower bound by Kuhn, Moscibroda, and Wattenhofer~\cite{KuhnMW16}. When expressed as a function of both the maximum degree \( \Delta \) and \( n \), this bound becomes \(\Omega\left(\min\left\{\sqrt{\frac{\log n}{\log \log n}}, \frac{\log \Delta}{\log \log \Delta}\right\}\right)\). In a very recent concurrent work, an improved lower bound of $\Omega\left(\min\{\sqrt{\log n},\log\Delta\}\right)$ was presented~\cite{LBMatching}.
This dependence on \( \Delta \) is tight for small values of \( \Delta \ll \exp\left(\sqrt{\log n}\right) \)~\cite{Ghaffari16} (as we explain shortly). For even smaller values of \( \Delta \ll \log \log n \), this dependency grows further due to the \(\Omega\left(\min\left\{\Delta, \frac{\log \log n}{\log \log \log n}\right\}\right)\) lower bound by Balliu, Brandt, Hirvonen, Olivetti, Rabie, and Suomela~\cite{BalliuBHORS21}, which is also tight~\cite{BarenboimE09,Kuhn09,BarenboimEK14}.

All these lower bounds apply to randomized algorithms. For deterministic algorithms, the goal is to approach the \(O(\log n)\) state-of-the-art complexity achieved by randomized ones.\footnote{The current best upper bound for deterministic algorithms is roughly \(O(\log^{5/3} n)\) rounds~\cite{ecomposition}, and the current best lower bound is \(\Omega(\min\{\Delta, \log n / \log \log n\})\)~\cite{BalliuBHORS21}. For more on deterministic algorithms, we refer the reader to the excellent surveys by Suomela~\cite{Suomela13} and Rozhon~\cite{abs-2406-19430}.} In this work, aiming to advance our understanding of Open Question~\ref{oq:main}, we focus on the randomized complexity of MIS as a function of \( n \), where the gap between the upper and lower bounds remains at \(\Theta\left(\sqrt{\log n}\right)\). As a result of our approach, we also obtain a new result as a function of \( \Delta \).

\paragraph{Recent Progress:} Over the past one and a half decades, two major breakthroughs have significantly advanced our understanding of Open Question~\ref{oq:main}. Barrenboim, Elkin, Pettie, and Schneider~\cite{BarenboimEPS16} presented an algorithm that takes \(O(\log^2\Delta + 2^{\sqrt{\log\log n}})\) rounds, which implies a $o(\log n)$-round algorithm for graphs with \(\Delta\in 2^{o(\sqrt{\log n})}\).

Later, Ghaffari~\cite{Ghaffari16} improved this result by presenting an \(O(\log\Delta + 2^{\sqrt{\log\log n}})\)-round algorithm, which was further improved to \(O(\log\Delta + \poly(\log\log n))\) rounds by Rozhon and Ghaffari~\cite{RozhonG20}.\footnote{The improvement from $2^{O(\sqrt{\log\log n})}$ to $\poly(\log\log n)$ in the additive terms comes from a recent breakthrough in deterministic algorithms~\cite{Luby86,AlonBI86}. In fact, these additive terms are simply $det(\log n)$, which is the deterministic complexity of MIS in $(\log n)$-node graphs. For a long time, $det(n')$ was only known to be $2^{O(\sqrt{\log n'})}$ until it was improved to $\poly\log n'$ by~\cite{RozhonG20}.} Hence, we now know that we can essentially replace the logarithmic dependence on \(n\) from classical algorithms with a logarithmic dependence on \(\Delta\). Yet, for graphs with \(\Delta = n^{\Omega(1)}\), the $\approx 40$-year-old \(O(\log n)\) bound is still the best known.

\paragraph{The Degree Reduction Barrier:} Trying to resolve Open Question~\ref{oq:main} affirmatively by introducing a better upper bound encounters a common barrier faced by all prior techniques.
 This is the \emph{degree reduction barrier}. Intuitively speaking, all prior algorithms follow a common theme: first, we find an independent set $\mathcal{I}$, then we remove the nodes in $\mathcal{I}$ from the graph together with their neighbors, and we repeat until the graph is empty. To show that the algorithm terminates after a desired number of rounds, the independent set $\mathcal{I}$ needs to be selected in a way that helps ``make progress" in the resulting graph; i.e., the removal of $\mathcal{I}$ from the graph ideally simplifies it to some extent. 

In the classical algorithms by~\cite{Luby86,AlonBI86}, they show a one-round algorithm that finds an independent set whose removal together with the neighboring nodes deletes a constant fraction of the edges in expectation. Therefore, after $O(\log n)$ rounds, the algorithm finds an MIS with high probability. The later algorithms by~\cite{BarenboimEPS16} and~\cite{Ghaffari16} show that instead of making progress by reducing the number of remaining edges, one can make progress by reducing the maximum degree $\Delta$. 

In~\cite{BarenboimEPS16}, the authors show that invoking \(O(\log\Delta)\) rounds of the classical algorithms by~\cite{Luby86,AlonBI86} reduces the degree of a typical node by a factor of 2. Therefore, after \(O(\log^2\Delta)\) rounds, a typical node is deleted from the graph. For the remaining (non-typical) nodes, they show that the size of each remaining connected component is sufficiently small. This is also known as the \emph{shattering phenomenon}. Once the graph shatters into small connected components, we can complete the MIS in these components efficiently.

Later, in~\cite{Ghaffari16}, the author showed an algorithm that reaches the shattering phenomena substantially faster -- in only $O(\log \Delta)$ rounds. In this algorithm, each node spends $O(\log \Delta)$ rounds in which it is removed with constant probability. One way to think about this algorithm is that a typical node experiences $O(\log\Delta)$ degree reductions, each by a constant factor. 

To achieve a sublogarithmic running time for MIS, we need to devise an algorithm that reduces the degree of a node even faster. Such an algorithm is particularly necessary in graphs with \(\Delta = n^{\Omega(1)}\). However, all prior techniques reduce the degree of a node by at most a constant factor per round. This is exactly the degree reduction barrier. We need novel techniques to find an independent set in one round (or in a constant number of rounds) whose removal reduces the degree of a typical node by a super-constant factor.

\paragraph{Faster Degree Reduction in Sparse Graphs:} Notably, the degree reduction barrier does not persist in graphs that admit a certain notion of sparsity; namely, low-\emph{arboricity} graphs. Roughly speaking, the arboricity \( \lambda \) of a graph \( G \) is the maximum density of any subgraph \( H \) of \( G \).\footnote{Formally, \(\lambda\) is defined as \(\max_{H \subseteq G} \frac{|E(H)|}{|V(H)| - 1}\). Note that the arboricity $\lambda\leq \Delta$.} In~\cite{BarenboimEPS16}, the authors show that one can reduce the degree of the graph from $\Delta$ to $\lambda\cdot 2^{\sqrt{\log n}}$ in only $O(\sqrt{\log n})$ rounds. Therefore, by combining this with the algorithm of~\cite{Ghaffari16}, one obtains an algorithm that finds an MIS in \( O\left(\log \lambda + \sqrt{\log n}\right) \) rounds. Consequently, this extends the family of graphs that admit a sublogarithmic-round algorithm to all graphs with \( \lambda \in  n^{o(1)} \).

Moreover, this also provides a sublogarithmic-round algorithm for all graphs that do not contain cycles of constant length. This is because, for any even integer \( k \in [4, \log n/\log\log n] \), any graph that lacks cycles of length at most \( k \) has arboricity at most \( O(k\cdot n^{1/k}) < n^{O(1/k)} \)~\cite{BONDY197497}. Thus, graphs in which the length of the shortest cycle (also known as the \emph{girth}) is super-constant have arboricity \( n^{o(1)} \), and therefore they admit a sublogarithmic-round algorithm. This leaves graphs that can contain constant-length cycles as the main hard instances toward resolving Open Question~\ref{oq:main}. (Here, by ``can contain,” we mean that the graph can contain arbitrarily many constant-length cycles. If the number of constant-length cycles is sufficiently small, or if they admit a specific structure, then one can obtain similar upper bounds on the arboricity in these special cases.~\cite{Jiang_2020})

Due to the centrality of MIS in distributed graph algorithms, and to make further progress on the $\approx 40$-year old Open Question~\ref{oq:main}, it is crucial to investigate whether we can break the degree reduction barrier in the family of graphs that can contain constant-length cycles.

\paragraph{The Constant Survival Probability Barrier:} When trying to overcome the degree reduction barrier, one faces a related and a much older barrier, which we refer to as the \emph{constant survival probability barrier}. Intuitively, for an algorithm to reduce the degree of a typical node by a super-constant factor in a round, it needs to delete a node with probability \(1 - o(1)\). In other words, the \emph{survival} probability of a node needs to be \(o(1)\). This way, a node would lose all but a \(o(1)\)-fraction of its neighbors in expectation. Observe that we do not need to achieve this guarantee in \emph{all} rounds; but it suffices to have it in \emph{most} of the rounds.\footnote{We note that it impossible to get sub-constant survival probability in all rounds. In fact, it is already impossible to get a constant survival probability in all rounds, as implied by the recent lower bound of~\cite{BalliuGKO23} for the node-average complexity of MIS. Roughly speaking, the node average complexity of an algorithm is the average (over the $n$ nodes) of the expected number of rounds it takes a node to arrive at its part of the solution.} 

However, achieving a sub-constant survival probability in a typical round is a classical barrier. In fact, the classical algorithms by~\cite{Luby86,AlonBI86} achieve a constant survival probability only in regular graphs, and it was only 30 years later, in the breakthrough by~\cite{Ghaffari16}, that an algorithm achieving constant survival probability in general graphs (in most of the \(O(\log\Delta)\) rounds) was found. 

This barrier persists even in trees, and it translates to an $O(\sqrt{\log n})$ barrier on the running time of MIS in trees. In particular, the state-of-the-art $O(\sqrt{\log n})$ upper bound for MIS in trees is also implied by the algorithm of~\cite{Ghaffari16} that achieves constant survival probability in most of the rounds. Getting better than a constant survival probability in a typical round in trees would imply a $o(\sqrt{\log n})$-round algorithm, which would break a conjecture in the book by Barenboim and Elkin~\cite{2013Barenboim}, and would imply a surprising separation between MIS and Maximal Matching in trees when combined with the recent lower bound of~\cite{LBMatching} (we elaborate on both implications shortly). 

\paragraph{Our Results:} In this work, we come very close to resolving the $\approx$40-year-old Open Question~\ref{oq:main} by presenting a sublogarithmic-round algorithm for graphs without cycles of length \(\leq 6\). Specifically, for such graphs, we devise a novel two-round algorithm in which the survival probability of a typical node is \(o(1)\), allowing us to overcome both the constant survival probability and degree reduction barriers.


\begin{restatable}{thm}{thmmain}
\label{thm:main-girth}
There is a randomized distributed algorithm that, with high probability, finds an MIS in $O\left(\frac{\log\Delta}{\log(\log^*\Delta)} + \poly(\log\log n)\right)$ rounds in graphs without cycles of length $\leq 6$.
\end{restatable}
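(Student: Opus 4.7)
The plan is to design a novel two-round \emph{degree-reduction subroutine} that, in graphs of girth at least $7$, drives the survival probability of a typical node below $1/\log^*\Delta$, and then to iterate this subroutine on the remaining active subgraph. Each invocation reduces the logarithm of the maximum effective degree by an additive $\Omega(\log(\log^*\Delta))$, so after $T = O(\log\Delta / \log(\log^*\Delta))$ phases the maximum degree in the active graph drops to $\poly\log n$. At that point a standard shattering argument, as in~\cite{Ghaffari16,RozhonG20}, shows that the remaining components have size $\poly\log n$, and the deterministic MIS of Rozhon and Ghaffari~\cite{RozhonG20} finishes them off in $\poly(\log\log n)$ rounds, yielding the claimed bound.

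The heart of the proof is the two-round subroutine and its analysis. I would have each node sample random bits (from a distribution whose shape depends on its current degree) in round $1$, exchange them with its neighbors, and then use round $2$ to inspect its $2$-hop neighborhood and decide whether to join an independent set $\mathcal{I}$, be removed because a neighbor joined $\mathcal{I}$, or stay active. The design has to be tailored so that for a typical node $v$, the probability of staying active \emph{and} retaining more than a $1/\log^*\Delta$ fraction of its neighbors is $O(1/\log^*\Delta)$. Once this marginal survival bound holds, the expected degree of $v$ in the next phase shrinks by a $\log^*\Delta$ factor, and a Chernoff argument transfers the expectation to a high-probability statement as long as $d(v) \geq \poly\log n$, which is exactly the regime needed before shattering kicks in.

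The girth-$\geq 7$ condition is what supplies the independence needed for the subroutine to break the constant-survival-probability barrier. In such a graph, no two neighbors of any vertex $v$ share a common neighbor (otherwise a $4$-cycle through $v$), and no two neighbors of $v$ have neighbors connected by an edge (otherwise a $5$- or $6$-cycle through $v$). Hence the punctured second neighborhoods $N(u)\setminus\{v\}$, taken over $u\in N(v)$, are pairwise vertex-disjoint and mutually non-adjacent. From $v$'s point of view the random bits consumed by the subroutine at its distinct neighbors therefore behave as independent trials, and this is exactly the decoupling that enables a $\log^*$-type amplification of the killing probability, in the spirit of how Linial's classical $\log^* n$-coloring exploits the independence of colors among neighbors.

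The main obstacle will be the subroutine analysis itself: pinning down the exact sampling distribution so that the marginal survival probability is truly $O(1/\log^*\Delta)$ for \emph{almost every} active node (not just regular ones), and then establishing a Chernoff-style concentration for the induced degree drop that genuinely leverages the girth-induced independence across neighbors of $v$ without being spoiled by the mild correlations that still exist along longer paths. A secondary technical step is to verify that a node which happens to fall into the small ``atypical'' class in some phase does not derail the overall degree schedule, so that the shattering argument of~\cite{Ghaffari16,RozhonG20} can still be invoked verbatim at the end. Once these two pieces are in place, the round count follows from the geometric-in-$\log\Delta$ schedule together with the $\poly(\log\log n)$ post-shattering cost.
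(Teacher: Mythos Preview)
Your high-level architecture---a two-round subroutine with sub-constant survival probability, iterated $O(\log\Delta/\log(\log^*\Delta))$ times, followed by shattering---matches the paper's. You also correctly identify why girth $\geq 7$ is the right hypothesis: it makes the punctured second neighborhoods of the neighbors of $v$ disjoint and non-adjacent, so the events ``$u$ joins $\mathcal{I}$'' for $u\sim v$ become (conditionally) independent, which is exactly what the paper exploits in Lemma~\ref{lem:weighted-logstar}.

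The genuine gap is in how you handle non-regular graphs. You propose that each node samples from ``a distribution whose shape depends on its current degree'' and that each phase reduces the maximum effective degree by a $\log^*\Delta$ factor. This is precisely the approach that does \emph{not} work: a degree-based marking probability is essentially Luby/ABI, and it has been known since before~\cite{Ghaffari16} that such schemes fail to give even constant survival probability at unbalanced nodes (a low-degree node whose neighbors all have much higher degree survives with probability $1-o(1)$). The paper does not do direct degree reduction at all. Instead it runs the two-round subroutine inside Ghaffari's desire-weight framework: each node carries a weight $p_v$ that is multiplicatively increased or decreased by a factor $\ell=\poly(\log^*\Delta)$ depending on whether the neighborhood's total desire $d_v=\sum_{u\sim v}p_u$ is below or above $\ell$. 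The two-round subroutine is then designed to give inclusion probability $\approx p_v\cdot\poly(\log^*\Delta)$ whenever $d_v\le\ell$ (Lemma~\ref{lem:inclusion-weighted-logstar}), and the sub-constant survival bound holds only in \emph{golden} iterations where the weights happen to be balanced. A potential-function argument on the $p_v$'s (Lemma~\ref{lem:golden-rounds}) shows that every surviving node either has $\Omega(T)$ golden iterations or $\Omega(T)$ iterations with $p_v=\tau$; the former nodes shatter into small components as you expect, but the latter nodes form a separate part $H'$ that is handled by a different argument (Lemma~\ref{lem:deg-bound} shows $H'$ has maximum degree $O(\ell/\tau)$, so a deterministic $O(\Delta')$-round algorithm finishes it). Your proposal is missing both the weight-balancing mechanism and this two-part post-shattering split; without them, the step ``survival probability is $O(1/\log^*\Delta)$ for almost every active node'' cannot be established, and the claim that the maximum degree drops to $\poly\log n$ does not follow.
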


This is the first sublogarithmic-round algorithm for graphs that can contain arbitrarily many constant-length cycles, provided that these cycles are of length greater than $6$. 
 As a result, we push the limit on the girth of graphs that admit sublogarithmic-round algorithms from $k = \omega(1)$ all the way down to a small constant $k=7$. 
 Consequently, in order to fully resolve Open Question~\ref{oq:main}, it remains to solve the instances that can contain cycles of length at most $6$, which is an extremely local property. 
 
 Observe that while the recent \(\Omega(\min\{\log \Delta, \sqrt{\log n}\})\) lower bound by~\cite{LBMatching} implies that the exact runtime from Theorem~\ref{thm:main-girth} cannot be achieved for low-degree graphs with \(\Delta = 2^{O(\sqrt{\log n})}\), it does not imply that this runtime cannot be obtained for large-degree graphs with $\Delta=\poly(n)$, which would be sufficient to resolve Open Question~\ref{oq:main} in the affirmative. In fact, we conjecture that there is a \(o(\log n)\)-round algorithm for MIS in general graphs, and that the algorithm from Theorem~\ref{thm:main-girth} will likely play an important role in such an algorithm.

Moreover, by combining our algorithm with the low-arboricity to low-degree reduction by~\cite{BarenboimEPS16}, we break another barrier in trees. 

\begin{restatable}{thm}{thmtrees}\label{thm:trees}
    There is a randomized distributed algorithm that, with high probability, finds an MIS in  $O\left(\sqrt{\frac{\log n}{\log(\log^*n)}}\right)$ rounds in trees.
\end{restatable}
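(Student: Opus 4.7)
The plan is to combine Theorem~\ref{thm:main-girth} with the low-arboricity-to-low-degree reduction of~\cite{BarenboimEPS16}, using a carefully tuned parameter. Trees have arboricity $\lambda=1$ and contain no cycles at all, so Theorem~\ref{thm:main-girth} applies not only to the input tree but also to every subgraph encountered along the way (any subgraph of a tree is a forest, hence trivially has no cycle of length $\leq 6$).

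Recall that the reduction of~\cite{BarenboimEPS16} admits the following tradeoff: for any parameter $t$, in $O(t)$ rounds one can identify a vertex set $S$ such that (i) the induced subgraph $G[S]$ has maximum degree $O(\lambda \cdot 2^{(\log n)/t})$, and (ii) any MIS of $G[S]$ can be completed to an MIS of $G$ within $O(t)$ additional rounds. The $O(\sqrt{\log n})$-round version mentioned earlier in the excerpt corresponds to the balanced choice $t=\sqrt{\log n}$; here I would instead invoke it with
\[
t^\star \;=\; \sqrt{\frac{\log n}{\log(\log^* n)}},
\]
which leaves a forest of maximum degree $\Delta' = 2^{O\!\left(\sqrt{\log n \cdot \log(\log^* n)}\right)}$.

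Next I would run Theorem~\ref{thm:main-girth} on this residual forest, costing
\[
O\!\left(\frac{\log \Delta'}{\log(\log^* \Delta')} + \poly(\log\log n)\right).
\]
Since $\Delta' = 2^{o(\log n)} = n^{o(1)}$, we have $\log^* \Delta' = \log^* n - O(1)$, so $\log(\log^* \Delta') = \Theta(\log(\log^* n))$. Substituting the value of $\Delta'$ gives
\[
O\!\left(\frac{\sqrt{\log n \cdot \log(\log^* n)}}{\log(\log^* n)} + \poly(\log\log n)\right) \;=\; O\!\left(\sqrt{\tfrac{\log n}{\log(\log^* n)}}\right),
\]
and adding the $O(t^\star)$ rounds of the reduction phase yields the same asymptotic bound.

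The routine part is the final arithmetic of balancing $t$ against $(\log n)/t$; the substantive ingredients are already established (the arboricity reduction is used as a black box, and Theorem~\ref{thm:main-girth} is proved elsewhere in the paper). The only point that requires a careful check is that $\log(\log^* \Delta')$ is still $\Theta(\log(\log^* n))$ at the chosen $t^\star$, which follows from $\Delta' = n^{o(1)}$; without this step one would simply recover the previous $O(\sqrt{\log n})$ bound. In that sense, identifying the correct tuning $t^\star$ is the only real choice to be made, and the improvement over the prior bound is entirely inherited from the $\log(\log^* \Delta)$ speed-up in Theorem~\ref{thm:main-girth}.
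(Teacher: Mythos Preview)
Your proposal is correct and matches the paper's proof essentially line for line: both invoke the BEPS degree-reduction (Theorem~\ref{thm:deg-arb} in the paper) with target degree $\Delta'=2^{\Theta(\sqrt{\log n\cdot\log(\log^* n)})}$, costing $O(\sqrt{\log n/\log(\log^* n)})$ rounds, and then apply Theorem~\ref{thm:main-girth} to the residual forest. Your parameterization (choosing the number of rounds $t^\star$ rather than the target degree $t$) is equivalent, and your explicit verification that $\log(\log^*\Delta')=\Theta(\log(\log^* n))$ is a detail the paper leaves implicit.
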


Theorem~\ref{thm:trees} improves upon the previous \( O(\sqrt{\log n}) \)-round algorithm implied by~\cite{Ghaffari16}, and it breaks a conjecture presented as part of Open Question 11.15 in the book by Barenboim and Elkin~\cite{2013Barenboim}, where they state:\footnote{The case of oriented trees is very different, as one can use the orientation to devise an $O(\log^* n)$-round algorithm~\cite{ColeV86}. However, one cannot hope to orient a tree efficiently, not even in $o(\log n)$ rounds~\cite{2013Barenboim}.}

\begin{center}
\begin{minipage}{0.85\textwidth}
\emph{``Pin down the randomized complexity of the MIS problem in unoriented trees. We conjecture that it is $\Theta(\sqrt{\log n})$.''}
\end{minipage}
\end{center}



At the time of the book’s publication, the best upper bound for trees was \(O(\sqrt{\log n \cdot \log \log n})\)~\cite{LenzenW11,BarenboimEPS16}, and the best lower bound was \(\Omega(\log^* n)\)~\cite{Linial92,Naor91}. As we discussed earlier, the upper bound was later improved to \(O(\sqrt{\log n})\) due to the algorithm by~\cite{Ghaffari16}, resolving the conjecture affirmatively from the upper bound side and leaving only the lower bound side open. Very recently, progress on MIS in trees seemed to be moving in the direction of resolving this conjecture affirmatively, due to an \(\Omega(\sqrt{\log n / \log \log n})\) lower bound by~\cite{BalliuGKO23}.

\paragraph{Separating MIS from Maximal Matching in Trees:} As discussed earlier, a very recent concurrent work~\cite{LBMatching} shows an $\Omega\left(\min\left\{\log\Delta,\sqrt{\log n}\right\}\right)$ lower bound for Maximal Matching (MM) in trees. Hence,  Theorem~\ref{thm:trees} implies a counterintuitive separation between MIS and MM in trees. MIS and MM are closely related, as one can find an MM by finding an MIS in the line graph.\footnote{In the line graph, the vertex set represents the set of edges of the original graph, and there is an edge between two vertices if the corresponding two edges intersect.} However, this reduction does not preserve the tree structure when applied to trees (e.g., if there is a node with degree at least three in the original tree, it would imply a triangle in the line graph). Theorem~\ref{thm:trees} shows that this is not just an artifact of the reduction, but a fundamental difference between the two problems in trees. That is, while MM can only be easier than MIS in general graphs, it is strictly \emph{harder} in trees. 
\paragraph{A Remark on Prior Work:} It is worth mentioning that the community had previously believed that an $\Omega\left(\min\left\{\log\Delta,\sqrt{\log n}\right\}\right)$ lower bound against randomized algorithms for MM in trees had been established 15 years ago via an extension of the lower bound of~\cite{KuhnMW16} (see also~\cite{abs-1011-5470}). However, as later clarified in~\cite{Bar-YehudaCS17} and in the full version of~\cite{KuhnMW16}, this belief was incorrect. The best-known lower bound for MM in trees (as well as in general graphs) until the very recent concurrent work by~\cite{LBMatching} remained only \(\Omega\left(\min\left\{\sqrt{\frac{\log n}{\log \log n}}, \frac{\log \Delta}{\log \log \Delta}\right\}\right)\)~\cite{KuhnMW16,BarenboimEPS16}. The recent work by~\cite{LBMatching} restores the lower bound to $\Omega\left(\min\left\{\log\Delta,\sqrt{\log n}\right\}\right)$ by employing a different technique than that used in~\cite{KuhnMW16}.



\section{Technical Overview}\label{sec:TO}

The main idea behind the proof of Theorem~\ref{thm:main-girth} is a novel two-round algorithm with \(o(1)\) survival probability. To explain our algorithm, let us start with Luby's classical algorithm.

\paragraph{Luby's Algorithm~\cite{Luby86,AlonBI86}:} In this algorithm, in each round, each node picks a uniformly random rank \(r_v \in [0,1]\). If the rank of a node \(v\) is smaller than all of its neighbors' ranks \(r_u\), then \(v\) joins the independent set \(\mathcal{I}\). As we discussed earlier, in~\cite{Luby86,AlonBI86}, it is shown that removing \(\mathcal{I}\) together with the neighboring nodes deletes a constant fraction of the edges. Thus, after repeating this for \(O(\log n)\) rounds, the algorithm terminates with high probability.

\paragraph{Survival Probability:} It is not difficult to see that in regular graphs, Luby's algorithm deletes each node with constant probability after one round. Intuitively speaking, this stems from the \emph{balance} between the joining probability (i.e., the probability that a node joins \(\mathcal{I}\)) and the number of neighbors a node has. Specifically, in a \(\Delta\)-regular graph, each node joins with probability \(1/(\Delta+1)\), as this is the probability that a node's rank is the smallest among \(\Delta+1\) ranks. To show that a node is deleted with constant probability, we call a neighbor \(w\) of \(v\) a witness (for \(v\)'s deletion) if \(w\) has the smallest rank both in its neighborhood and in \(v\)'s neighborhood. For a neighbor \(w\) to be a witness, its rank must be the smallest among \(2\Delta - 1\) ranks, which occurs with probability \(1/(2\Delta - 1)\). Moreover, since \(v\) can have at most one witness, this implies that \(v\) is deleted with probability at least \(\Delta/(2\Delta - 1)\). Thus, \(v\) survives with probability \(\approx 1/2\).

If we could extend this survival probability guarantee beyond regular graphs (and beyond a single round), it would imply that Luby's algorithm itself shatters the graph into sufficiently small connected components after \(O(\log\Delta)\) rounds, after which we can complete the MIS in these components in \(\poly(\log\log n)\) rounds.\footnote{We elaborate on the shattering phase in the technical part in Section~\ref{sec:full}.} However, if the graph is not regular, then an \emph{unbalanced} node (i.e., a node \( u \) whose degree is \( \deg_u \) while the degree of any neighbor \( w \) is \( \deg_w \gg \deg_u \))
 survives with a probability \( > 1-o(1) \).

\paragraph{Beyond Regular Graphs via Ghaffari's Algorithm~\cite{Ghaffari16}:} The main intuition behind Ghaffari's algorithm is to balance the joining probabilities of the nodes over rounds, so that in most rounds, the behavior of the algorithm resembles Luby's on regular graphs. More specifically, each node \( v \) is associated with a parameter \( p_v \in [0,1] \) that captures its \emph{desire} to join the independent set in a given round. Initially, the values \(\{p_{v}\}_{v\in G}\) are all set to \(1/2\). Then, each node marks itself with probability \( p_v \) and joins \(\mathcal{I}\) if it is marked and none of its neighbors is marked.\footnote{This also resembles another version of Luby's algorithm, which was also given in the algorithm by Alon-Babai-Itai~\cite{AlonBI86}.} After each round, each node adjusts its desire probability based on the desire of its neighbors. Specifically, let \( d_v := \sum_{u \sim v} p_u \) be the \emph{total desire} in \( v \)'s neighborhood. If \( d_v \geq 2 \), then \( v \) decreases its desire by setting \( p_v = p_v / 2 \) for the next round. Otherwise, if \( d_v < 2 \), \( v \) increases its desire by setting \( p_v = \min\{2p_v, 1/2\} \) for the next round.

In~\cite{Ghaffari16}, it is shown that after \( O(\log\Delta) \) rounds, each node must have spent most of the rounds where it is deleted with constant probability. This allows the graph to shatter into sufficiently small connected components, resulting in the $O(\log\Delta+\polylog\log n)$ state-of-the-art running time for MIS.

To achieve a better running time, as a starting point, if we could attain a sub-constant survival probability per round in regular graphs, we could hope to use similar ideas as in~\cite{Ghaffari16} to maintain this guarantee for most rounds in general graphs. As an even simpler starting point, we could focus on regular graphs without any constant-length cycles. 

However, as we show in Theorem~\ref{thm:lb} (in Appendix~\ref{sec:lb}), no randomized one-round algorithm without unique identifiers can achieve better than a constant survival probability, even in regular graphs. Note that all randomized MIS algorithms in the literature, including those discussed above as well as ours, do not rely on vertex identifiers (unlike deterministic algorithms, which must have them). In fact, if unique identifiers are required for a randomized algorithm, they could be generated using the algorithm’s randomness.
Theorem~\ref{thm:lb} further highlights the limitations of prior techniques, as they repeatedly find an independent set in the remaining graph by using a one-round algorithm each time. To our surprise, Theorem~\ref{thm:lb} does not extend to \emph{two}-round algorithms, paving the way for our novel two-round algorithm with sub-constant survival probability.

\paragraph{Our Algorithm:} To explain our algorithm, let us revisit Luby's. Recall that in Luby's algorithm, a node \( v \) ``gives up" on joining the independent set \( \mathcal{I} \) in a round if it has a neighbor \( u \) with rank \( r_u < r_v \). In this case, we say that \( u \) \emph{dominates} \( v \). To achieve a fast running time, we would like a node \( v \) to reach its final status (i.e., whether it is in the computed MIS or not) as quickly as possible. Thus, if \( v \) is dominated by \( u \), and \( u \) ends up joining the independent set in that round, it helps \( v \) reach its final status immediately after that round.

Unfortunately, in several scenarios, \( v \) is dominated by a neighbor \( u \) that does not end up joining the independent set in that round. This happens when \( u \) itself has a dominating neighbor \( w \neq v \). In that case, \( v \) ``counted" on \( u \) to help it reach its final status by joining the independent set, but \( u \) fails to do so because it has a dominating neighbor of its own.

This opens up a potential avenue for improving Luby's Algorithm -- a vertex \( v \) should not \emph{fully} count on a vertex \( u \) that dominates it. In particular, when \(r_u = 0.49\) and \(r_v = 0.5\), the difference between the ranks of \(u\) and \(v\) is inconsequential and \(v\) should not give up all hope of joining the independent set. This raises an important question -- how much should \( v \) count on a dominating neighbor \( u \) to join the independent set? Naturally, the smaller \( r_u \) is, the more likely \( u \) is to join the independent set.


This is precisely the main idea behind our two-round algorithm. Our algorithm begins like Luby's by assigning a uniformly random rank \( r_v \in [0,1] \) to each node \( v \). Then, a node \( v \) is called a candidate if its rank is \(\approx e^i / \Delta \) for some \( i \in [1, \dots, \ln \Delta] \), and none of its neighbors \( u \) has substantially smaller rank \( r_u \leq i / \Delta \). A node then joins the independent set if it is a candidate and none of its neighbors is a candidate. Interestingly, by not giving up on joining the independent set unless there is a neighbor with substantially smaller rank, our algorithm achieves a higher inclusion probability compared to Luby's, provided the graph has no triangles (by inclusion probability, we mean the probability that a node is included in the independent set). Specifically, the inclusion probability of our two-round algorithm in triangle-free graphs is \( \approx\log^* \Delta / \Delta \). Thus, in a neighborhood of a node, the expected number of nodes that join the independent set is \( \approx\log^* \Delta\), in contrast to Luby's algorithm, in which at most one neighbor joins the independent set in expectation.

The fact that a super-constant number of neighbors join the independent set in expectation is the crucial property that helps us achieve a sub-constant survival probability. We show that if there are also no cycles of length \(\leq 6\), then at least one neighbor joins the independent set with probability at least \( 1 - e^{\poly(\log^* \Delta)} \). Thus, a node reaches its final status (and is therefore deleted from the graph) with $1-o(1)$ probability at the end of our two-round algorithm, which implies a survival probability of $o(1)$.

We conclude our technical overview in Section~\ref{sec:warmup}, where we provide a formal proof for the higher inclusion probability guarantee of our algorithm in the simpler case of regular triangle-free graphs. Before that, we briefly outline the plan for extending this approach beyond regular graphs.

\paragraph{Combining Our Framework with Ghaffari's:} Now that we have a two-round algorithm with \(o(1)\) survival probability in regular graphs, we could hope to use the ideas in~\cite{Ghaffari16} to extend it to general graphs. However, the framework of~\cite{Ghaffari16} encounters some complications in our setting, which we overcome as follows.

As a first step, we need to devise a two-round algorithm that takes into account the desire weights \(\{p_v\}_{v\in G}\). Roughly speaking, since we want to achieve a higher inclusion probability (which would imply a lower survival probability), we need each node \( v \) to join the independent set with a probability higher than its desire probability \( p_v \). For this, we use an idea conceptually similar to the regular graph case (though differing in technical details) to devise a two-round algorithm with the following guarantee: if the total desire \( d_v = \sum_{u \sim v} p_u \) in the neighborhood of \( v \) is less than \(\approx (\log^{*}\Delta)^{1/100} \), then \( v \) joins the independent set with probability \( p_v \cdot \poly(\log^*\Delta) \). To ensure this does not exceed 1, our algorithm caps the probability at a sufficiently small value \( \ll \poly(1/\log^{*}\Delta) \), unlike in~\cite{Ghaffari16}, where probabilities are capped at \( 1/2 \).

Furthermore, to achieve a better overall running time, we need to increase or decrease the desire probabilities by a factor higher than 2. Instead, we adjust the desire probabilities by a factor of \( \ell := \poly(\log^*\Delta) \) after each call of the two-round algorithm. Specifically, if the total neighborhood desire \( d_v = \sum_{u \sim v} p_u \geq \ell \), then \( v \) decreases its desire by setting \( p_v = p_v / \ell \). Otherwise, if \( d_v < \ell \), then \( v \) increases its desire by updating \( p_v = \min\{\ell p_v, \tau\} \), where \( \tau \ll \poly(1/\log^*\Delta) \).

\paragraph{Our Post-Shattering Phase:} Interestingly, our post-shattering phase differs slightly from prior work. Instead of showing that the graph shatters into sufficiently small connected components after \( O(\log_{\ell}\Delta) = O(\log\Delta / \log(\log^*\Delta)) \) rounds, we show that it shatters into two parts, \( H \) and \( H' \). In \( H \), each connected component is sufficiently small, allowing us to complete the MIS in this part using standard techniques from prior work. In \( H' \), the maximum degree is only \( \Delta'=O(\ell / \tau) \), allowing us to use the deterministic \( O(\Delta' + \log^* n) \)-round algorithm in \( H' \) as a black-box to finish the MIS in this part.

\subsection{The Core Idea: Higher Inclusion Probability
}\label{sec:warmup}

In this section we show that there exists a two-round algorithm that includes each node with probability higher than $1/\Delta$ in $\Delta$-regular triangle-free graphs. 

\begin{lem}\label{thm:log-star-lower}
In $\Delta$-regular triangle-free graphs, there is a two-round algorithm that finds an independent set $\mathcal{I}$  such that the probability that a node $v$ is included in $\mathcal{I}$ is at least $\frac{\log^* \Delta}{10^4\Delta}$.
\end{lem}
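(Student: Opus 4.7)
The plan is to give an explicit two-round algorithm, verify its output is independent, and lower-bound $\Pr[v \in \mathcal{I}]$. In Round~1 each $v$ independently samples a uniform rank $r_v \in [0,1]$ and sends it to $N(v)$. In Round~2, using only the ranks in $N[v]$, $v$ declares itself a \emph{candidate at level $i$}, for $i \in \{1, \dots, \lceil \ln \Delta \rceil\}$, if $r_v \in (e^{i-1}/\Delta,\, e^i/\Delta]$ and no neighbor $u$ has $r_u \le i/\Delta$; then $v$ joins $\mathcal{I}$ iff it is a candidate and no neighbor of $v$ is also a candidate. Independence of $\mathcal{I}$ is immediate from the last clause, so the entire game is to lower-bound $\Pr[v \in \mathcal{I}]$.

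The analysis decomposes $\Pr[v \in \mathcal{I}] = \sum_i \Pr[v \text{ cand.\ at level } i] \cdot \Pr[\text{no cand.\ neighbor} \mid v \text{ cand.\ at level } i]$. Because $r_v$ is independent of $\{r_u\}_{u \in N(v)}$, the first factor equals $\tfrac{(e-1)e^{i-1}}{\Delta} \cdot (1 - i/\Delta)^\Delta = \Theta(1/\Delta)$ per level. The target bound thus reduces to showing that the conditional probability of no candidate neighbor is $\Omega(1)$ for at least $\Omega(\log^* \Delta)$ levels.

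To control this conditional probability I would first bound $\E[\#\text{cand.\ neighbors} \mid v \text{ cand.\ at level } i]$. For a fixed neighbor $u$, I expand over $u$'s possible levels $j$; each summand decomposes into three (approximately) independent constraints on (a) $r_u$ (which must lie in the level-$j$ band and respect $r_u > i/\Delta$ from the conditioning), (b) $r_v$ (which must exceed $j/\Delta$ so that it does not block $u$'s candidacy), and (c) the other neighbors of $u$ (which must all exceed $j/\Delta$). Triangle-freeness is essential here: any common neighbor of $u$ and $v$ would form a triangle, so $N(u) \setminus \{v\}$ is disjoint from $N(v)$, making the ranks in (c) genuinely independent of everything in the conditioning. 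Integrating the three factors with the stated thresholds gives a per-neighbor bound of $O(1/\Delta)$ and hence an expectation of $O(1)$ candidate neighbors.

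The main obstacle is converting this $O(1)$ expectation into an $\Omega(1)$ lower bound on $\Pr[\text{no cand.\ neighbor}]$, uniformly over enough levels to accumulate the $\log^* \Delta$ factor. A Markov bound is useless for this, so I would use a second-moment (Chebyshev / Paley--Zygmund) argument, controlling $\operatorname{Cov}(\mathbf{1}[u_1 \text{ cand.}], \mathbf{1}[u_2 \text{ cand.}])$ for distinct $u_1, u_2 \in N(v)$. Triangle-freeness guarantees $u_1 \not\sim u_2$, but common neighbors of $u_1$ and $u_2$ (i.e., $4$-cycles through $v$) still induce positive correlations which must be tracked carefully to keep the variance small. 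Once the variance is controlled for the right range of levels, multiplying by the $\Theta(1/\Delta)$ per-level candidate probability and summing over the $\Omega(\log^* \Delta)$ productive levels yields the required $\Pr[v \in \mathcal{I}] \geq \log^* \Delta / (10^4 \Delta)$.
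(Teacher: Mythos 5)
Your algorithm is not the paper's, and the difference is fatal. You place $v$'s level-$i$ band at $(e^{i-1}/\Delta, e^i/\Delta]$ but set the neighbor threshold at $i/\Delta$; the paper instead defines a tower-exponential sequence $a_1 = 5$, $a_{i+1} = e^{a_i - 3}$, puts the band at $I_i = (a_i/\Delta,\, a_{i+1}/\Delta]$ and the threshold at $a_i/\Delta$, so that the \emph{bottom of the band coincides with the threshold}. This alignment is the structural heart of the proof: conditioned on $v\in C_i$, a neighbor $u$ with rank in a lower band is ruled out by the conditioning ($r_u > a_i/\Delta$), and a neighbor with rank in a higher band $I_j$, $j > i$, cannot be a candidate because $v$ itself violates $u$'s threshold ($r_v \le a_{i+1}/\Delta \le a_j/\Delta$). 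Hence the only neighbors that could possibly spoil $v$'s inclusion live in the \emph{same} band $I_i$, and the analysis becomes a single one-level Markov bound on $|N_i(v)|$ followed by a union bound, with the per-neighbor candidacy probability $\approx e^{-a_i}$ and the band population $\approx a_{i+1} = e^{a_i-3}$ cancelling to give expected $< 1$ offending neighbors.

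With your mismatched thresholds this collapse does not happen. Given $v\in C_i$ ($r_v\approx e^i/\Delta$), a neighbor $u$ is not blocked by $v$ at any level $j \lesssim e^{i-1}$ (since $r_v > e^{i-1}/\Delta \ge j/\Delta$), so the sum over $u$'s possible levels $j$ ranges over roughly $e^{i-1}$ values, each contributing $\Theta(1/\Delta)$ to $\Pr[u\in C \mid v\in C_i]$; multiplying by $\Delta$ neighbors gives an expected $\Theta(e^{i-1})$ candidate neighbors, not the $O(1)$ you assert, and once this expectation is $\gg 1$ no second-moment refinement can restore an $\Omega(1)$ lower bound on the no-candidate-neighbor event. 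If instead you move the threshold to the band bottom $e^{i-1}/\Delta$ to fix the blow-up, then $\Pr[v\in C_i] \approx (e^{i-1}/\Delta)\, e^{-e^{i-1}}$ decays super-exponentially in $i$, so the per-level inclusion probabilities no longer sum to $\Theta(\log^*\Delta/\Delta)$. The point is that geometric bands cannot simultaneously keep $\Pr[v\in C_i] = \Theta(1/\Delta)$ \emph{and} keep the expected number of same-band candidate neighbors $O(1)$; the tower recurrence $a_{i+1} = e^{a_i-3}$ is exactly what balances both. Relatedly, you dismissed Markov too quickly and reached for Paley--Zygmund and $4$-cycle covariances: with the correct level structure a first-moment Markov bound on $|N_i(v)|$ plus a plain union bound suffices, and no assumption beyond triangle-freeness is used in this lemma.
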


    Before providing the algorithm, we set up some notation. We denote the input graph by $G$, the set of nodes by $V(G)$, and the set of edges by $E(G)$. We assume without loss of generality that $\Delta > 1000$.\footnote{Otherwise, Luby's one-round algorithm includes each node with probability $1/(\Delta+1) > \log^* \Delta / (1000\Delta)$ for $\Delta \leq 1000$.} Let $k = \lfloor (\log^* \Delta)/10 \rfloor$, $a_1 = 5$, and $a_{i+1} = e^{a_i - 3}$ for every integer $i \in [2, \cdots, k]$. Furthermore, let $b_i = a_i/\Delta$, and define the intervals $I_i = (b_i, b_{i+1}]$ and $J_i = (b_i, 1]$ for each integer $i \in [k]$. Observe that \( a_i < \Delta^{1/10} \), and that all intervals \( I_i \) and \( J_i \) are contained within the interval \( [0, 1] \), since \( \frac{5}{\Delta} \leq b_i \leq \frac{1}{100} \) by the definition of \( k \). Finally, we use the standard notation \( u \sim v \) to refer to a neighbor \( u \) of \( v \).


The algorithm for proving Lemma~\ref{thm:log-star-lower} is provided in Algorithm~\ref{alg:2-round}. Each node $v$ picks a uniformly random rank $r_v \in [0,1]$. If $r_v\in I_i$ for some $i\in [k]$ and $r_u\in J_i$ for all $u\sim v$, then $v$ joins a set of candidates $C$.\footnote{Intuitively speaking, $v$ is a candidate if its rank falls into some interval $I_i$ and none of the neighbors' ranks falls into a smaller interval $I_j$ for $j<i$.} If $v$ is a candidate and none of its neighbors is a candidate, then $v$ is added to the independent set $\mathcal{I}$. The algorithm can be implemented in two rounds: in the first round, each node picks a random rank and sends it to its neighbors. In the second round, each node informs its neighbors whether it is a candidate and decides whether to join the independent set accordingly.

\begin{algorithm}
\caption{Two-round algorithm for proving Lemma~\ref{thm:log-star-lower}}\label{alg:2-round}
\begin{algorithmic}[1]
\State Initialize the independent set $I \gets \emptyset$ and the candidate set $C \gets \emptyset$
\For{each node $v \in V(G)$}
    \State Sample a uniformly random rank $r_v \in [0,1]$ independently
    \If{$r_v \in I_i$ for some $i \in [k]$}
        \State Let $i(v)$ be the unique value for which $r_v \in I_{i(v)}$
        \If{$r_u \in J_{i(v)}$ for all $u \sim v$}
            \State Add $v$ to the set of candidates $C$
        \EndIf
    \EndIf
\EndFor
\For{each node $v \in C$}
    \If{$u \notin C$ for all $u \sim v$}
        \State Add $v$ to the independent set $\mathcal{I}$
    \EndIf
\EndFor
\end{algorithmic}
\end{algorithm}

\begin{proof}[Proof of Lemma~\ref{thm:log-star-lower}]
Clearly, Algorithm~\ref{alg:2-round} produces an independent set $\mathcal{I}$ as no candidate joins $\mathcal{I}$ if it has a neighboring candidate. Hence, we focus on the probability that $u$ is added to the independent set (i.e., the inclusion probability). 

~\\\noindent\textbf{Inclusion probability}: We fix the node $v$ for the rest of the proof. Let $C_i$ be the set of candidates in $C$ with ranks in $I_i$, i.e., $C_i = \{u \in C \mid r_u \in I_i\}$. First, we show that the probability that $v \in C_i$ is $\approx 1/\Delta$ for any $i \in [k]$. Then, we show that the probability that $v \in \mathcal{I}$ given that $v \in C_i$ is at least a constant. Hence, since the $I_i$'s are disjoint, summing over all $i$'s gives an inclusion probability of $\approx \log^* \Delta / \Delta$.

~\\\noindent\textbf{Lower bounding $\Pr[v\in C_i]$:} Observe that:

\begin{align}
    \Pr[v\in C_i]  &= \Pr [ r_v\in I_i \text{ and } (r_u\in J_i \text{ } \forall u\sim v)]\\ &= (b_{i+1} - b_i)\cdot (1-b_i)^\Delta\\
    &\geq \frac{a_{i+1} - a_i}{\Delta}\cdot \exp\left(-\Delta( b_i + 2b_i^2)\right)\\
    &\geq \frac{a_{i+1}}{2\Delta}\cdot \exp\left(-a_i - 2a_i^2/\Delta\right)\\
    &\geq \frac{a_{i+1}}{2\Delta}\cdot \exp\left(-a_i - 1/2\right)\\
    &\geq \frac{a_{i+1}}{2\Delta}\cdot \frac{1}{e^4\cdot a_{i+1}}\\
    &=\frac{1}{2\cdot e^4\cdot \Delta}
\end{align}

where (2) follows by independence, (3) follows by a standard argument for approximating exponents (see also Proposition~\ref{prop:exp-approx}), (5) follows since $a_i< \Delta^{1/10}$ and $\Delta>1000$, which implies that $2a_i^2/\Delta < 1/2$, and (6) follows sins $a_{i+1}\cdot e^4 = e^{a_i-3}\cdot e^4 > e^{a_i}\cdot \sqrt{e} = \exp(a_i+1/2)$.

~\\\noindent\textbf{Lower bounding $\Pr[v\in \mathcal{I}\mid v\in C_i]$:} The main idea is to notice that if $v \in C_i$, then whether it is included in the independent set $\mathcal{I}$ depends solely on its neighbors $u$ with ranks $r_u \in I_i$. This is because if $v \in C_i$, then all of its neighbors $u$ have ranks $r_u \in J_i$. Furthermore, those with ranks $r_u \in I_j$ for $j > i$ cannot be candidates (due to Line 6 in Algorithm~\ref{alg:2-round}) because they have a neighbor, $v$ itself, with rank $r_v \notin J_j$. Hence, to bound the probability that a node $v \in C_i$ is added to the independent set, it suffices to bound the probability that none of the neighbors $u$ with $r_u \in I_i$ are in $C$ (the other neighbors are not in $C$ with probability 1).

For this, we first upper bound the number of neighbors of $v\in C_i$ with ranks in  $I_i$, then we upper bound the probability that a given such neighbor is a candidate, and finally we use a standard union bound argument to show that none of these neighbors is a candidate with constant probability. 

Let $N_i(v)$ be the set of neighbors $u$ of $v$ with $r_u\in I_i$, and let $s_i = |N_i(v)|$. Observe that for a neighbor $u$ of $v$, $\Pr[u\in N_i(v)\mid v\in C_i] = (b_{i+1}-b_i)/(1-b_i)$ (since $v\in C_i$, we know that $u$ is not in the interval $[0,b_i]$). Hence, we get that $\mathbb{E}[s_i\mid v\in C_i] \leq  \Delta\cdot b_{i+1}/(99/100)\leq 2\Delta\cdot b_{i+1} = 2a_{i+1}$. Therefore, by Markov's inequality, $\Pr[s_i < 4a_{i+1}\mid v\in C_i]\geq 1/2$. 

Furthermore, for a neighbor $u\in N_i(v)$ it holds that $\Pr[u\in C\mid v\in C_i] =(1-b_i)^{(\Delta-1)}$. This is because for such a neighbor $u$, since $r_u\in I_i$, the probability that it is in $C$ is exactly the probability that it does not have a neighbor $w\neq v$ with $r_w\notin J_i$. Since the graph is triangle-free, $v$ and $u$ have no common neighbors. Therefore, the event that $u$ has no neighbor $w\neq v$ with $r_w\in J_i$ is independent of the event that $v\in C_i$. Hence, for a neighbor $u$ of $v$, $\Pr[r_w\in J_i \text{ } \forall w\sim u \mid v\in C_i] =(1-b_i)^{(\Delta-1)}\leq e^{-a_i(\Delta-1)/\Delta} = e^{-a_i + a_i/\Delta}\leq e^{-a_i + 1/2}$. By a union bound, if $s_i<4a_{i+1}$, we get that $\Pr[u\notin C \text{ } \forall u\in N_i(v) \mid v\in C_i]\geq 1-4a_{i+1}\cdot e^{-a_i + 1/2} \geq 1- 4/e^2$. 

Now we are ready to give a lower bound on $\Pr[v\in \mathcal{I}\mid v\in C_i]$. Observe that:
\setcounter{equation}{0}

\begin{align}
    \Pr[v\in \mathcal{I}\mid v\in C_i] &= \Pr[u\notin C \text{ } \forall u\sim v \mid v\in C_i]\\ &= \Pr[u\notin C \text{ } \forall u\in N_i(v) \mid v\in C_i]\\
    &\geq \Pr[u\notin C \text{ } \forall u\in N_i(v)\mid v\in C_i, s_i < 4a_{i+1})]\cdot \Pr[s_i < 4a_{i+1}\mid v\in C_i]\\
    &\geq \left(1-4a_{i+1}\cdot \frac{1}{e^2\cdot a_{i+1}}\right)\cdot (1/2)\\
    &\geq \left(1-4e^{-2}\right)\cdot (1/2)\\
    & \geq \frac{1}{5}
\end{align}

where (2) and (4) follow from the discussion above. Given the lower bounds on $\Pr[v\in C_i]$ and $\Pr[v\in \mathcal{I}\mid v\in C_i]$, we are ready to give a lower bound on the probability that a node $v$ is included in $\mathcal{I}$. 

~\\\noindent\textbf{Lower bounding $\Pr[v\in \mathcal{I}]$:} Since the $I_i$'s are disjoint, we have that:

\begin{align*}
\Pr[v\in \mathcal{I}] &= \sum_{i=1}^k \Pr[v\in C_i \cap \mathcal{I}]\\
&=\sum_{i=1}^k \Pr[v\in \mathcal{I}\mid v\in C_i]\cdot \Pr[v\in C_i]\\
&\geq \sum_{i=1}^k \frac{1}{5}\cdot \frac{1}{2\cdot e^4\Delta}\\
&\geq \frac{\log^*\Delta}{10^4\Delta}
\end{align*}

as desired.




\end{proof}

\section{Faster Degree Reduction}\label{sec:mainalg}

In this section we prove Theorem~\ref{thm:main-girth}, which we restate here for convenience. At the end of the section we also provide a proof for Theorem~\ref{thm:trees}.

\thmmain*

To prove Theorem~\ref{thm:main-girth}, we first present a generalization to Lemma~\ref{thm:log-star-lower} beyond regular graphs in Section~\ref{sec:weightedInclusion}. Then, in Section~\ref{sec:weightedSurvival}, we derive an upper bound on the survival probability in graphs with girth at least $7$. Finally, in Section~\ref{sec:full} we give the full algorithm. Interestingly, the only part of the proof that assumes that the graph has girth at least $7$ is the upper bound on the survival probability in Lemma~\ref{lem:inclusion-weighted-logstar}. For the other parts, it suffices that the graph is triangle-free.

\subsection{Higher Inclusion Probability Beyond Regular Graphs}\label{sec:weightedInclusion}

In this section, we extend Lemma~\ref{thm:log-star-lower} to apply beyond regular graphs. Each node \( v \) is associated with a weight \( p_v \in [0,1] \), which can be thought of as a measure of \( v \)’s ``desire" to join the independent set. In the regular case, the analogous desire for each node is \( 1/\Delta \), and Algorithm~\ref{alg:2-round} includes each node with probability \(\approx \log^*\Delta/\Delta \). Here, we present an algorithm that includes a node \( v \) with probability \( \poly(\log^*\Delta) \cdot p_v \), as long as the total desire \( \sum_{u\sim v} p_u \) in its neighborhood is small. The formal statement is given in the following lemma.

\begin{lem}\label{lem:inclusion-weighted-logstar} Let $\tau := 1/(\log\log\log \Delta)$, $\ell:= (\log^*(1/\tau))^{1/100}$, and $\beta := \log^*(1/\tau)/\ell$.
Consider a triangle-free graph $G$ with node weights $\{p_v\}_{v\in V(G)}$ with $0 \le p_v \le \tau$. For any $v\in V(G)$, let $d_v := \sum_{u\sim v} p_u$. Then, there is a randomized algorithm that finds an independent set $\mathcal{I}$ such that each node $v\in V(G)$ with $d_v \le \ell$ is included in $\mathcal{I}$ with probability at least $p_v \beta/10^6$.
\end{lem}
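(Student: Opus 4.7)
The plan is to adapt Algorithm~\ref{alg:2-round} to the weighted setting by scaling the intervals by the per-node desire $p_v$ rather than by $1/\Delta$. Fix a super-exponentially growing sequence $a_1 < a_2 < \cdots < a_k$ (with $a_1$ a constant and $a_{i+1} \approx e^{a_i-O(1)}$), truncated so that $a_k \leq 1/\tau$ and $k = \Theta(\log^*(1/\tau))$. Each node $v$ samples $r_v \sim \mathrm{Unif}[0,1]$ and is a candidate at level $i$ iff $r_v \in I_i^v := (a_i p_v,\, a_{i+1} p_v]$ and every neighbor $u$ satisfies $r_u > a_i p_u$; then $v$ joins $\mathcal{I}$ iff it is a candidate at some level and no neighbor is. This is a two-round distributed algorithm: round 1 exchanges $(p_v, r_v)$, and round 2 exchanges candidate bits.

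Mirroring the proof of Lemma~\ref{thm:log-star-lower}, I would first lower-bound
$\Pr[v \in C_i] = (a_{i+1}-a_i)\, p_v \prod_{u\sim v}(1-a_i p_u)$
by expanding $\ln(1-x)\geq -x-x^2$. The first-order term contributes $-a_i d_v \geq -a_i \ell$, while the cap $p_u\leq\tau$ bounds the second-order term by $\sum_u(a_i p_u)^2 \leq a_i^2\tau\ell$, which is $O(1)$ whenever $a_i \leq 1/\sqrt{\tau\ell}$. Rescaling the regular-case sequence so that $\hat a_i := a_i \ell$ obeys the same recurrence as in the regular case makes $(a_{i+1}-a_i)\,e^{-a_i\ell} = \Omega(1/\ell)$, yielding $\Pr[v\in C_i] \geq \Omega(p_v/\ell)$ per level.

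For the conditional inclusion $\Pr[v \in \mathcal{I}\mid v \in C_i]$, the same interval argument from Lemma~\ref{thm:log-star-lower} rules out any neighbor $u$ lying in $C_j$ for $j\neq i$ (for $j<i$ we have $r_u > a_i p_u \geq a_{j+1}p_u$; for $j>i$, $u \in C_j$ would require $r_v > a_j p_v \geq a_{i+1} p_v$), so only $C_i$ can threaten $v$. The triangle-free hypothesis ensures that $u$'s other neighbors are not neighbors of $v$, decoupling the relevant products and giving $\Pr[u \in C_i\mid v \in C_i] \leq 2(a_{i+1}-a_i)\, p_u \prod_{w\sim u,\, w\neq v}(1-a_i p_w)$. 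A union bound over $u\sim v$, combined with the $\hat a_i$-rescaling and $d_v \leq \ell$, yields $\sum_u \Pr[u\in C_i\mid v\in C_i] \leq O(1)$ in the ``regular-like'' regime, hence $\Pr[v\in \mathcal{I}\mid v\in C_i] \geq \Omega(1)$. Combining gives
$\Pr[v\in\mathcal{I}] \geq \sum_{i=1}^{k} \Pr[v\in C_i]\,\Pr[v\in \mathcal{I}\mid v\in C_i] \geq \Omega\!\left(p_v \log^*(1/\tau)/\ell\right) = \Omega(p_v\beta),$
and careful tracking of constants will produce the $p_v\beta/10^6$ bound.

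The main obstacle is keeping the product $\prod_u(1-a_i p_u)$ (and the analogous product in the conditional step) tight across $\log^*(1/\tau)$-many levels despite the $\ell$ factor in the exponent. Unlike the regular case, where the product simplifies cleanly to $(1-a_i/\Delta)^\Delta \approx e^{-a_i}$, the weighted case requires controlling a quadratic correction via the cap $p_u \leq \tau$. The choice $\tau \ll 1/\log\log\log\Delta$ and $\ell = (\log^*(1/\tau))^{1/100}$ is precisely what makes the range $a_i \leq 1/\sqrt{\tau\ell}$ large enough to contain $\Theta(\log^*(1/\tau))$ super-exponentially spaced levels. A secondary subtlety is handling highly irregular neighborhoods (where some $u\sim v$ has $d_u\ll\ell$): in such cases the naive union bound is not tight, and a Markov-style conditioning on $|N_i(v)|$ analogous to the ``$s_i < 4a_{i+1}$ with probability $\geq 1/2$'' step in the proof of Lemma~\ref{thm:log-star-lower} will be needed to keep the argument quantitative.
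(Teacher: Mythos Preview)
Your proposal has a genuine gap in the conditional inclusion step. With your threshold ``$r_u > a_i p_u$'' (the same $a_i$ for every node), the conditional probability that a neighbor $u$ becomes a level-$i$ candidate, given $r_u\in I_i^u$, is $\prod_{w\sim u, w\ne v}(1-a_i p_w)\approx e^{-a_i d_u}$. This depends on $d_u$, and when $d_u\ll \ell$ it is essentially $1$. Concretely, take $v$ with $d_v=\ell$ whose neighbors $u$ each have $v$ as their \emph{only} neighbor (so $d_u=p_v\le\tau$). Then, conditioned on $v\in C_i$, every neighbor with $r_u\in I_i^u$ is automatically in $C_i$, and
\[
\Pr[v\in\mathcal{I}\mid v\in C_i]=\Pr[\,\forall u\sim v:\ r_u\notin I_i^u\mid v\in C_i\,]\approx e^{-a_{i+1}d_v}=e^{-\hat a_{i+1}},
\]
which is already tiny for $i\ge 1$. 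Summing over levels yields only $\Pr[v\in\mathcal{I}]=O(p_v/\ell)$, missing the required $\log^*(1/\tau)$ factor. The Markov conditioning on $|N_i(v)|$ you propose cannot repair this: the problem is not that too many neighbors land in $I_i^u$, but that each one that does is a candidate with probability close to $1$.

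The paper's remedy is to make the neighbor threshold depend on the checking node's own total desire: for $v$'s level-$i$ candidacy one requires $r_u>\ell a_i p_u/d_v$ (not $r_u>a_i p_u$). Then $\prod_{u\sim v}(1-\ell a_i p_u/d_v)\approx e^{-\ell a_i}$ regardless of $d_v$, and symmetrically $\Pr[u\in C_i^+\mid r_u\in I_{u,i}]\approx e^{-\ell a_i}$ regardless of $d_u$, so the Markov-plus-union-bound step gives $4a_{i+1}\ell\cdot 2e^{-\ell a_i}=O(1)$ uniformly. This normalization forces $\ell a_i p_u/d_v<1$, which fails when $d_v$ is very small; the paper therefore also splits off a separate trivial rule $C^-$ (candidate iff $r_v\le\beta p_v$) for nodes with $d_v\le\sqrt{\tau}$, and shows via a ``same-level'' proposition that $C^-$ and the $C_i^+$'s cannot be adjacent. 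Both ingredients---the $d_v$-normalized threshold and the $C^-$/$C^+$ split---are missing from your outline and are what make the $\Omega(1)$ conditional inclusion hold across all $\log^*(1/\tau)$ levels.
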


\begin{table}[ht]
\renewcommand{\arraystretch}{1.5}
\centering
\begin{tabular}{|c|l|l|}
\hline
\textbf{Notation} & \textbf{Value} & \textbf{Note} \\ \hline
$G$ & & The input graph \\ \hline
$V(G)$ & & The set of nodes \\\hline
$E(G)$ & & The set of edges \\\hline
$\Delta$ & & The maximum degree in the graph. \\ \hline
$\tau$ & $1/(\log\log\log \Delta)$ &  \\ \hline
$\ell$ & $(\log^*(1/\tau))^{1/100}$ & \\ \hline
$\beta$ & $\log^*(1/\tau)/\ell$ & \\
\hline
$p_v$ & Variable & \makecell[l]{A weight associated with node $v$; \\ intuitively, $p_v$ is the ``desire" of $v$ to join $\mathcal{I}$.} \\ \hline
$d_v$ & $\sum_{u \sim v} p_u$ & \makecell[l]{Total ``desire" in the neighborhood of $v$.} \\ \hline
$k$ & $\log^*(1/\tau)/100$ & The number of intervals. \\ \hline
$a_1$ & $10 \beta$ & Initial value in the $a_i$ sequence. \\ \hline
$a_{i+1}$ & $e^{\ell a_i}/(16\ell)$ & Recursive definition for $a_i$. \\ \hline
$b_{v,i}$ & $a_i p_v$ & \makecell[l]{A parameter used to adjust the boundary \\ of the intervals $I_{v,i}$ and $J_{u,v,i}$.} \\ \hline
$I_{v,i}$ & $(b_{v,i}, b_{v,i+1}]$ & \makecell[l]{Interval associated with node $v$ and index $i$.} \\ \hline
$J_{v,u,i}$ & $(\ell b_{u,i}/d_v, 1]$ & \makecell[l]{Interval associated with edge  $\{u,v\}$ and index $i$.} \\ \hline
\end{tabular}
\caption{Table of Notations}
\label{table}
\end{table}

Before we present the proof of Lemma~\ref{lem:inclusion-weighted-logstar}, we set up some notation. For convenience, a list of notations is provided in Table~\ref{table}. Let $k = \log^*(1/\tau)/100$, $a_1 := 10 \beta$ and, for every $i \in [k]$, let $a_{i+1} := e^{\ell a_i}/(16\ell)$. For every $i \in [k]$, let $b_{v,i} := a_i p_v$. For any $v \in V(G)$ and $i \in [k]$, define the interval $I_{v,i} := (b_{v,i}, b_{v,i+1}]$. For any $\{u,v\} \in E(G)$ and $i \in [k]$, define the interval $J_{v,u,i} := (\ell b_{u,i}/d_v, 1]$. 

The algorithm for proving Lemma~\ref{lem:inclusion-weighted-logstar} is given in Algorithm~\ref{alg:weightedInclusion}. Each node \( v \) samples a uniformly random rank \( r_v \in [0,1] \). Then, based on the ranks, we construct a set of candidates as follows. If \( d_v \leq \sqrt{\tau} \) and \( r_v \leq \beta p_v \), then \( v \) is added to a set of candidates \( C^- \). Otherwise, in the case that \( \sqrt{\tau}\leq d_v \leq \ell \), if \( r_v\in  I_{v,i} \) for some \( i \in [k] \) and \( r_u\in  J_{v,u,i} \) for any neighbor $u$ of $v$, then we add $v$ to a set of candidates $C^+_i$. Finally a node joins the independent set \( \mathcal{I} \) if it is a candidate in \( C = C^- \cup (\cup_{i=1}^k C_i^+) \) and none of its neighbors is a candidate. Before we prove the correctness of the algorithm, let us prove the following useful proposition.

\begin{algorithm}
\caption{Inclusion$(G, p)$}
\begin{algorithmic}[1]
\State Initialize the independent set $I\gets \emptyset$, and the sets of candidates $C, C^-, C_i^+ \gets \emptyset$ for all $i \in [k]$.
\State For every node $v\in V(G)$, sample a uniformly random rank $r_v \in [0,1]$ independently.

\For{each $v \in V(G)$}
    \If{$d_v \leq \sqrt{\tau}$}
        \State Add $v$ to $C^-$ if $r_v \leq \beta p_v$
    \ElsIf{$d_v \leq \ell$}
        \If{$r_v \in I_{v,i}$ for some $i \in [k]$}
            \State Let $i(v)$ be the unique value for which $r_v \in I_{v,i(v)}$
            \State Add $v$ to $C_{i(v)}^+$ if $r_u \in J_{v,u,i(v)}$ for all $u \sim v$
        \EndIf
    \EndIf
\EndFor

\State $C \gets C^- \cup (\cup_{i=1}^k C_i^+)$.
\For{each $v \in V(G)$}
    \If{$v \in C$ and $u \notin C$ for all $u \sim v$}
        \State Add $v$ to $\mathcal{I}$
    \EndIf
\EndFor
\State \Return $\mathcal{I}$
\end{algorithmic}\label{alg:weightedInclusion}
\end{algorithm}

\begin{prop}\label{prop:same-adj}
For any two neighbors $u,v\in C$, either $u,v\in C^-$, or $u,v\in C_i^+$ for some $i\in [k]$.
\end{prop}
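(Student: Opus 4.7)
The plan is to enumerate the possible ways that two neighbors $u,v$ can simultaneously lie in $C$. Since the intervals $I_{v,i}$ over $i \in [k]$ are pairwise disjoint, each node lies in at most one $C_i^+$; combined with the constraint that $C^-$ only accepts $v$ when $d_v \le \sqrt{\tau}$ and $C_i^+$ only accepts $v$ when $\sqrt{\tau} < d_v \le \ell$, these sets are pairwise disjoint as well. So if the conclusion fails, we are in one of two mixed cases: (i) one of $u,v$ lies in $C^-$ and the other in some $C_j^+$; or (ii) $u \in C_i^+$ and $v \in C_j^+$ with $i \ne j$. In each case I would derive a contradiction by combining the two upper/lower bounds that the candidate definitions impose on $r_u$.

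For case (i), say $u \in C^-$ and $v \in C_j^+$. Being in $C^-$ forces $r_u \le \beta p_u$. Being in $C_j^+$ forces $r_w \in J_{v,w,j}$ for every neighbor $w \sim v$; specialized to $w=u$, this reads $r_u > \ell b_{u,j}/d_v = \ell a_j p_u / d_v$. Canceling the positive factor $p_u$ and using $a_j \ge a_1 = 10\beta$ yields $d_v > 10\ell$, which directly contradicts the eligibility condition $d_v \le \ell$ needed for $v$ to have been added to $C_j^+$. (The symmetric sub-case $v \in C^-$, $u \in C_i^+$ is identical after swapping roles.)

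For case (ii), assume without loss of generality that $i < j$. Since $u \in C_i^+$, we have $r_u \in I_{u,i} \subseteq (0, a_{i+1}p_u]$, and since $v \in C_j^+$ with $u \sim v$, we have $r_u \in J_{v,u,j}$, i.e. $r_u > \ell a_j p_u / d_v$. Canceling $p_u$ gives $d_v > \ell a_j / a_{i+1} \ge \ell$, where the last step uses $j \ge i+1$ together with the monotonicity of the sequence $(a_i)$, which follows from the recursion $a_{i+1} = e^{\ell a_i}/(16\ell)$ and the base value $a_1 = 10\beta$ (so that $e^{\ell a_1} \ge 16\ell a_1$ for the parameter regime we are in). Again this contradicts $d_v \le \ell$.

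The main (mild) obstacle is simply bookkeeping: one must track which constraint on $r_u$ comes from $u$'s own candidacy versus from its being a neighbor of the candidate $v$, and then read off the definitions of $I_{u,i}$, $J_{v,u,j}$, and $b_{u,i}$ correctly. Once the definitions are unrolled, both case-analyses collapse to a single inequality forcing $d_v > \ell$, which is exactly excluded for any candidate outside $C^-$. The constants $a_1 = 10\beta$ and the factor $\ell$ in the definition of $J_{v,u,i}$ were chosen with precisely this disjointness in mind.
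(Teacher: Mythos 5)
Your proof is correct and follows essentially the same approach as the paper: the key mechanism in both is that the $J$-interval requirement in a candidate's definition, combined with the $d\le\ell$ bound, is incompatible with a neighboring candidate belonging to a different tier. The paper packages this as a one-sided ordering claim (``$u\in C_i^+$ forces $r_v>b_{v,i}$, so $v$'s index is $\ge i$'') applied symmetrically, whereas you enumerate the mixed cases and cancel $p_u$ to reach a numerical contradiction with $d_v\le\ell$; the two presentations are interchangeable.
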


\begin{proof}
Suppose that $u\in C_i^+$ for some $i\in [k]$. By definition of $C^+_i$, $d_u\leq \ell$ and $r_v\in J_{u,v,i} = (\ell b_{v,i}/d_u,1]$, which implies that $r_v > b_{v,i}$. Hence, $r_v\notin I_{v,j}$ for any $j < i$, which implies that $v\notin C_j^+$ for any $j < i$. Furthermore, since $r_v > b_{v,i} = a_i p_v \geq a_1 p_v > \beta p_v$, it holds that $v\notin C^-$. Thus, if $v\in C$, it must be the case that $v\in C_j^+$ for some $j\ge i$. Applying the same reasoning for $v$ shows that $i\ge j$, which implies that $i = j$ as desired. 

Furthermore, if $u\in C^-$ instead, the argument above shows that $u$ has no neighbor in $\bigcup_{i=1}^k C^+_i$, as desired.
\end{proof}

\begin{proof}[Proof of Lemma \ref{lem:inclusion-weighted-logstar}]
Clearly, Algorithm~\ref{alg:weightedInclusion} produces an independent set, as no candidate joins the independent if it has a neighboring candidate. Thus, we focus on lower bounding the probability that a vertex $v$ with $d_v\le \ell$ is in $\mathcal{I}$. First, consider the case where $d_v \le \sqrt{\tau}$. In this case,

\begin{align*}
\Pr[v\in \mathcal{I}] &= \Pr[v\in C^- \text{ and } u\notin C \text{ for all } u\sim v]\\
&= \Pr[v\in C^- \text{ and } u\notin C^- \text{ for all } u\sim v]\\
&= \Pr[u\notin C^- \text{ for all } u\sim v \mid v\in C^-]\Pr[v\in C^-]\\
&\ge \left(1 - \sum_{u\sim v} \beta p_u\right) (\beta p_v)\\
&\ge (1 - \beta \sqrt{\tau})\beta p_v\\
&\ge \beta p_v/100
\end{align*}

where the second equality follows from Proposition \ref{prop:same-adj}. This completes the first case. Next, we move onto the second case, i.e. when $\sqrt{\tau} < d_v \le \ell$. Similarly to the proof of Lemma~\ref{thm:log-star-lower}, we first lower bound $\Pr[v\in C^+_i]$, and then we lower bound $\Pr[v\in \mathcal{I}\mid v\in C^+_i]$. To lower bound the latter, we first upper bound the number of neighbors $u$ of $v$ with ranks in $I_{u,i}$ (these are the possible candidate neighbors if $v\in C^+_i$), then we analyze the probability that a given such neighbor is not a candidate, and finally we use a union bound argument to show that none of these neighbors is a candidate with constant probability. We start with $\Pr[v\in C_i^+]$. 

~\\\noindent \textbf{Lower bounding $\Pr[v\in C^+_i]$:} Notice that for any $i\in [k]$,

\setcounter{equation}{0}

\begin{align}
\Pr[v\in C_i^+] &= \Pr[r_v\in I_{v,i}]\prod_{u\sim v}\Pr[r_u\in J_{v,u,i}]\\
&\geq \frac{b_{v,i+1}}{2}\prod_{u\sim v}\left(1 - \frac{\ell b_{u,i}}{d_v}\right)\\
&\ge \frac{b_{v,i+1}}{2}\prod_{u\sim v}\exp\left(-\frac{\ell a_i p_u}{d_v} - 2\frac{(\ell a_i)^2p_u^2}{d_v^2}\right)\\
&> \frac{b_{v,i+1}}{2}\prod_{u\sim v}\exp\left(-\frac{\ell a_i p_u}{d_v} - 2\frac{\tau (\ell a_i)^2p_u}{\sqrt{\tau}d_v}\right)\\
&= \frac{b_{v,i+1}}{2} e^{-\ell a_i-2\sqrt{\tau}(\ell a_i)^2}\\
&> \frac{a_{i+1}}{6} e^{-\ell a_i}p_v\\
&\geq \frac{p_v}{96\ell}
\end{align}

where (3) follows from a standard argument for approximating exponents (see also Proposition \ref{prop:exp-approx}), (6) follows since $2\sqrt{\tau}(\ell a_i)^2\leq 1$, as $a_i<(1/\tau)^{1/10}$ for any $i\in [k]$ and $\ell<\log^*(1/\tau)$, and (7) follows since $a_{i+1} = e^{\ell a_i}/(16\ell)$.

~\\\noindent\textbf{Lower bounding $\Pr[v\in \mathcal{I}\mid v\in C^+_i$]:} First, we analyze the number of neighbors $u$ of $v$ with ranks $r_u\in I_{u,i}$, and then we show that none of them is a candidate with constant probability. Let $N_i(v)$ denote the set of neighbors $u$ of $v$ with  $r_u \in I_{u,i}$, and let $s_i(v) = |N_i(v)|$. Observe that:

\setcounter{equation}{0}

\begin{align}
\mathbf{E}[s_i(v) \mid v \in C_i^+] &= \sum_{u\sim v} \Pr[r_u\in I_{u,i} \mid v\in C_i^+]\\
&= \sum_{u\sim v} \Pr[r_u \in I_{u,i} \mid r_u \in J_{v,u,i}]\\
&\leq \sum_{u\sim v} \frac{b_{u,i+1}}{1 - \ell b_{u,i}/d_v}\\
&\leq \sum_{u\sim v} \frac{b_{u,i+1}}{1 - \ell a_i p_u/d_v}\\
&\le \sum_{u\sim v} \frac{a_{i+1}p_u}{1 - \ell a_i \sqrt{\tau}}\\
&\le \sum_{u\sim v} \frac{a_{i+1}p_u}{1/2}\\
&\le 2a_{i+1}d_v\\
&\le 2a_{i+1}\ell
\end{align}

where (5) follows since $p_u\leq \tau$ and $d_v\geq \sqrt{\tau}$, and (6) follows since $\ell a_i << 1/(2\sqrt{\tau})$ for any $i\in [k]$. Thus, by Markov's Inequality,

$$\Pr[s_i \le 4 a_{i+1}\ell \mid v\in C_i^+] \ge 1/2$$

Moreover, since the graph is triangle-free $u$ and $v$ have no common neighbors. Thus, for a neighbor $u\in N_i(v)$ of $v\in C^+_i$ we have that:

\setcounter{equation}{0}

\begin{align}
\Pr[u \in C \mid v\in C^+_i] &= \Pr[u \in C^+_i \mid v\in C^+_i]\\
&=\prod_{w\sim u: w\ne v} \Pr[r_w \in J_{u,w,i}]\\
&= \prod_{w\sim u: w\ne v} (1 - \ell b_{w,i}/d_u)\\
&\le \prod_{w\sim u: w\ne v} \exp(-\ell a_i p_w/d_u)\\
&= e^{-\ell a_i + \ell a_i p_v/d_u}\\
&\le e^{-\ell a_i + \ell a_i\sqrt{\tau}}\\
&\le 2e^{-\ell a_i}
\end{align}

where (1) follows by Proposition~\ref{prop:same-adj}, and (6) follows since $p_v\leq \tau$ and $d_u\geq \sqrt{\tau}$ (otherwise $u \notin N_i(v)$), and $\ell a_i << 1/(2\sqrt{\tau})$. Thus, by a union bound,

$$\Pr[\exists u\sim v: u\in C_i^+ \mid v\in C_i^+,s_i(v)\leq 4a_{i+1}\ell] \le 4e^{-\ell a_i} s_i(v) \le 8e^{-\ell a_i }a_{i+1}\ell\leq 1/2$$

where the last inequality follows since $a_{i+1} = e^{\ell a_i}/(16\ell)$. Now we are ready to lower bound $\Pr[v\in \mathcal{I}\mid v\in C^+_i]$. Combining the calculations above shows that:

\setcounter{equation}{0}

\begin{align}
    \Pr[v\in \mathcal{I}\mid v\in C_i^+] &= \Pr[\forall u~\sim v: u\notin C\mid v\in C^+_i] \\&= \Pr[\forall u~\sim v: u\notin C^+_i\mid v\in C^+_i]\\
    &\geq \Pr[\forall u~\sim v: u\notin C^+_i\mid v\in C^+_i, s_i(v)\leq 4a_{i+1}\ell] \Pr[s_i(v)\leq 4a_{i+1}\ell\mid v\in C^+_i]\\
    &\geq 1/4
\end{align}

where (2) follows from Proposition~\ref{prop:same-adj}, and (4) follows since $\Pr[\forall u~\sim v: u\notin C^+_i\mid v\in C^+_i, s_i(v)\leq 4a_{i+1}\ell] = 1- \Pr[\exists u\sim v: u\in C_i^+ \mid v\in C_i^+, s_i(v)\leq 4a_{i+1}\ell]\geq 1/2$ and $\Pr[s_i(v)\leq 4a_{i+1}\ell\mid v\in C^+_i]\geq 1/2$, as discussed above. Finally, by Proposition \ref{prop:same-adj}, and since the intervals $I_{v,i}$ are disjoint for any $v$, we have that when $\sqrt{\tau}\le d_v \le \ell$,

$$\Pr[v\in \mathcal{I}] = \sum_{i=1}^k \Pr[v\in \mathcal{I}\mid v\in C_i^+]\Pr[v\in C_i^+]\ge \frac{kp_v}{384\ell} \ge \beta p_v/10^6$$

completing the second case, as desired.
\end{proof}

A small modification to the proof of Lemma \ref{lem:inclusion-weighted-logstar} proves the following:

\begin{lem}\label{lem:better-inclusion-weighted-logstar}
Let $G$ be a triangle-free graph. For any $\{u,v\}\in E(G)$ with $d_v \le \ell$ and node weights $p$ with $0\le p_v\le \tau$ for all $v\in V(G)$, Algorithm~\ref{alg:weightedInclusion} has the property that:

$$\Pr[v\in \mathcal{I}\mid r_u \ge \tau^{1/10}] \ge p_v\beta/10^6$$
\end{lem}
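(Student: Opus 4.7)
The plan is to audit the proof of Lemma~\ref{lem:inclusion-weighted-logstar} and verify that each step remains valid after additionally conditioning on the event $r_u \ge \tau^{1/10}$; in fact, most inequalities become slightly better. No new machinery is needed—just careful bookkeeping of where $r_u$ enters the analysis.

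The first key observation is that $r_u \ge \tau^{1/10}$ \emph{forces} $u \notin C$. Indeed, $u \in C^-$ would require $r_u \le \beta p_u \le \beta\tau$, and $u \in C_i^+$ for some $i \in [k]$ would require $r_u \le b_{u,i+1} = a_{i+1}p_u \le a_k \tau$; since $a_i < (1/\tau)^{1/10}$ throughout the range $[k]$, both thresholds are much smaller than $\tau^{1/10}$. The second key observation is that triangle-freeness implies that any other neighbor $w \ne u$ of $v$ is not adjacent to $u$, so membership of $w$ in $C$ depends only on $r_w$ and on ranks of vertices adjacent to $w$, none of which is $u$. Hence the events ``$w \in C$'' for $w \ne u$ are independent of $r_u$, as is $r_v$.

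With these in hand, we revisit the two cases of the proof. In the case $d_v \le \sqrt{\tau}$, $\Pr[v \in C^- \mid r_u \ge \tau^{1/10}] = \beta p_v$ by the independence of $r_v$ and $r_u$, and the union bound over neighbors $w \sim v$ can be restricted to $w \ne u$ (since $u \notin C$ under the conditioning), giving $\Pr[v \in \mathcal{I} \mid r_u \ge \tau^{1/10}] \ge (1 - \beta\sqrt{\tau})\beta p_v \ge \beta p_v/100$. In the case $\sqrt{\tau} < d_v \le \ell$, the lower bound for $\Pr[v \in C_i^+]$ has a factor $\Pr[r_u \in J_{v,u,i}]$ that, under the conditioning, becomes exactly $1$ (since the threshold $\ell b_{u,i}/d_v \le \ell a_i \sqrt{\tau} \ll \tau^{1/10}$), which only strengthens the lower bound. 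The bound $\mathbf{E}[s_i(v) \mid v \in C_i^+, r_u \ge \tau^{1/10}] \le 2a_{i+1}\ell$ carries over because $u$ is excluded from $N_i(v)$ with probability $1$ under the conditioning, and for $w \ne u$ the events ``$r_w \in I_{w,i}$'' are unchanged. Similarly, the union bound for $\Pr[\exists w\sim v: w \in C_i^+]$ carries over since the $u$ term is zero and the $w \ne u$ terms are unaffected. Summing over $i \in [k]$ as in the original proof yields the desired $\Pr[v \in \mathcal{I} \mid r_u \ge \tau^{1/10}] \ge p_v\beta/10^6$.

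The main subtlety (rather than obstacle) is tracking where the independence arguments implicitly rely on $u$ sharing no neighbor with $w$; this is exactly where triangle-freeness is used. Once this is confirmed, the proof is a straightforward rerun of Lemma~\ref{lem:inclusion-weighted-logstar}, with a handful of inequalities replaced by equalities and a handful of union-bound terms involving $u$ dropped because they are identically zero.
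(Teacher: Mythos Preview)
Your proposal is correct and takes essentially the same approach as the paper: both observe that $r_u \ge \tau^{1/10}$ forces $u \notin C$ (and $r_u \in J_{v,u,i}$ for all $i$), then rerun the two cases of Lemma~\ref{lem:inclusion-weighted-logstar} with the conditioning in place. If anything, your version is more explicit than the paper's about where triangle-freeness is invoked to guarantee that the events ``$w\in C$'' for neighbors $w\ne u$ of $v$ are independent of $r_u$; the paper simply asserts the relevant conditional inequalities and refers back to the calculations in Lemma~\ref{lem:inclusion-weighted-logstar}.
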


where $\ell, \tau, \beta$, and $d_v$ are as defined in Lemma~\ref{alg:weightedInclusion}.
\begin{proof}
The proof is very similar to the proof of Lemma~\ref{lem:inclusion-weighted-logstar}. Observe that if $r_u>\tau^{1/10}$, then $r_u\notin I_{u,i}$ and $r_u\in J_{v,u,i}$ for any $i\in [k]$. Therefore, $u\notin C=C^-\cup (\bigcup_{i=1}^k C_i^+)$ with probability $1$, given that $r_u>\tau^{1/10}$. This implies that the lower bounds derived on $\Pr[v\in \mathcal{I}]$ in the proof of Lemma~\ref{lem:inclusion-weighted-logstar} apply to $\Pr[v\in \mathcal{I}\mid r_u\geq \tau^{1/10}]$. In particular, going over the same case analysis an in the proof of Lemma~\ref{lem:inclusion-weighted-logstar}, if $d_v\leq \sqrt{\tau}$, we have that:

\setcounter{equation}{0}
    \begin{align}\Pr[v\in \mathcal{I}\mid u\notin C] &= \Pr[v\in C^- \text{ and } u'\notin C \text{ for all } u'\sim v\mid u \notin C]\\
    &\geq\Pr[v\in C^- \text{ and } u'\notin C \text{ for all } u'\sim v]\\
    &\geq \beta p_v/100
    \end{align}
where (2) follows since $u$ is a neighbor of $v$ and (3) follows from the calculation in the proof of Lemma~\ref{lem:inclusion-weighted-logstar} for the case where $d_v\leq \sqrt{\tau}$. For the other case where $\sqrt{\tau}\leq d_v\leq \ell$, observe that $\Pr[v\in C^+_i\mid r_u\in J_{v,u,i}]\geq \Pr[v\in C^+_i]\geq p_v/(96\ell)$, where the last inequality follows from the calculation in the proof of Lemma~\ref{lem:inclusion-weighted-logstar}. Similarly, $\Pr[v\in \mathcal{I}\mid v\in C^+_i, u\notin C]\geq \Pr[v\in \mathcal{I}\mid v\in C^+_i]\geq 1/4$, which implies that $\Pr[v\in \mathcal{I}\mid r_u\geq \tau^{1/10}]\geq \beta p_v/10^6$, as desired. 
\end{proof}

\subsection{Deriving A Small Survival Probability}\label{sec:weightedSurvival}


~\\In this section we use the  higher inclusion probability of Algorithm~\ref{alg:weightedInclusion} to derive a small survival probability in graphs with girth at least $7$. 

\begin{lem}\label{lem:weighted-logstar}
Let $G$ be a graph with girth at least 7 and maximum degree $\Delta$, $\{p_v\}_{v\in V(G)}$ be nodes weights with $0\le p_v\le \tau = 1/\log\log\log\Delta$ for all $v\in V(G)$, and $d_v:=\sum_{u\sim v} p_u$. For a node $v\in V(G)$, let $L_v:=\{u\sim v\mid d_u\leq \ell\}$, where $\ell = (\log^*(1/\tau))^{1/100}$. We say that a node $v$ is good if the following two conditions hold:

\begin{enumerate}
    \item $d_v \ge 1/\ell$
    \item $\displaystyle \sum_{\scriptstyle u\in L_v} p_u \ge \frac{d_v}{100\ell^2}$

\end{enumerate}

It hols that algorithm~\ref{alg:weightedInclusion} finds an independent set $\mathcal{I}$ such that for each good node $v\in V(G)$,

$$\Pr[\exists u\in \mathcal{I}: u\sim v] \ge 1 - 2e^{-(\log^* \Delta)^{24/25}/10^{10}}$$
\end{lem}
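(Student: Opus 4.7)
The plan is to focus on neighbors of $v$ that lie in $L_v$, lower bound the inclusion probability of each one individually, argue that these inclusion events are mutually independent once we condition on $r_v$ being large, and then combine this with the two hypotheses defining a ``good'' node. Concretely, I would start from Lemma~\ref{lem:better-inclusion-weighted-logstar}, applied with the roles of its $u$ and $v$ swapped: for every neighbor $u \in L_v$, since $d_u \le \ell$, this yields
\[
\Pr\bigl[u \in \mathcal{I} \,\bigm|\, r_v \ge \tau^{1/10}\bigr] \;\ge\; \frac{p_u\beta}{10^6}.
\]

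The crucial step, and what I expect to be the main obstacle, is to argue that these events become mutually independent once we condition on $r_v \ge \tau^{1/10}$. The key structural observation is that (as already noted in the proof of Lemma~\ref{lem:better-inclusion-weighted-logstar}) the event $r_v \ge \tau^{1/10}$ forces $v \notin C$; moreover, for every $u \in L_v$ and every $i \in [k]$, the threshold $\ell b_{v,i}/d_u$ is itself at most $\tau^{1/10}$ (using $p_v \le \tau$ and $d_u \ge \sqrt{\tau}$ whenever $u$ is eligible for some $C_i^+$; for $d_u < \sqrt\tau$ the candidacy of $u$ does not involve $r_v$ at all), so $r_v$ ceases to influence whether $u$ is a candidate. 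Hence, after conditioning, $\{u \in \mathcal{I}\}$ depends only on the ranks of vertices in $B(u) := \{u\} \cup (N(u)\setminus\{v\}) \cup \bigcup_{x\in N(u)\setminus\{v\}} N(x)$. It then suffices to verify that for any two distinct $u_1, u_2 \in N(v)$ the sets $B(u_1)$ and $B(u_2)$ are disjoint. This is a short case check and it is the only place where the girth hypothesis is used: any shared vertex $w \in B(u_1) \cap B(u_2)$ produces a closed walk of length at most $6$ through $v, u_1, u_2$ and $w$, specifically a triangle (if $u_1 \sim u_2$), a $4$-cycle (common neighbor of $u_1, u_2$ other than $v$), a $5$-cycle (an edge between $N(u_1)\setminus\{v\}$ and $N(u_2)\setminus\{v\}$), or a $6$-cycle (distinct $x \in N(u_1)\setminus\{v\}$ and $y \in N(u_2)\setminus\{v\}$ with $x \sim w \sim y$). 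Each of these is forbidden by girth $\ge 7$.

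Given this conditional independence, the probability that no neighbor in $L_v$ joins $\mathcal{I}$ factorizes, yielding
\[
\Pr\!\bigl[\forall u \in L_v \,:\, u \notin \mathcal{I} \,\bigm|\, r_v \ge \tau^{1/10}\bigr] \;\le\; \prod_{u\in L_v}\!\Bigl(1 - \tfrac{p_u\beta}{10^6}\Bigr) \;\le\; \exp\!\Bigl(-\tfrac{\beta}{10^6}\!\sum_{u\in L_v} p_u\Bigr).
\]
The goodness hypotheses give $\sum_{u\in L_v} p_u \ge d_v/(100\ell^2) \ge 1/(100\ell^3)$, and by definition $\beta/\ell^3 = \log^*(1/\tau)/\ell^4 = (\log^*(1/\tau))^{96/100} = (\log^*(1/\tau))^{24/25}$, which equals $(\log^*\Delta)^{24/25}$ up to an $O(1)$ shift inside the $\log^*$. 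A final union bound adds the negligible term $\Pr[r_v < \tau^{1/10}] \le \tau^{1/10} = (\log\log\log\Delta)^{-1/10}$, which is far smaller than $\exp(-(\log^*\Delta)^{24/25}/C)$ for any constant $C$ once $\Delta$ is large enough (since $\log\log\log\log\Delta$ eventually dominates any fixed power of $\log^*\Delta$). This gives the claimed $1 - 2\exp\bigl(-(\log^*\Delta)^{24/25}/10^{10}\bigr)$ bound.
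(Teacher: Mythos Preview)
Your argument is correct and follows exactly the same route as the paper: condition on $r_v \ge \tau^{1/10}$, invoke Lemma~\ref{lem:better-inclusion-weighted-logstar} for each $u\in L_v$, use girth $\ge 7$ to get conditional independence, multiply, and absorb the $\tau^{1/10}$ term. In fact you spell out the independence step (why $r_v\ge\tau^{1/10}$ eliminates all influence of $r_v$, and why the sets $B(u)$ are pairwise disjoint) more carefully than the paper, which simply asserts it in one line.
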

\begin{proof}
We upper bound the complementary event. Notice that

\begin{align*}
\Pr[\forall u\sim v : u\notin \mathcal{I}] &= \Pr[\forall u\sim v : u\notin \mathcal{I} \text{ and } r_v \ge \tau^{1/10}] + \Pr[\forall u\sim v : u\notin \mathcal{I} \text{ and } r_v < \tau^{1/10}]\\
&\le \Pr[\forall u \in L_v : u\notin \mathcal{I} \mid r_v \ge \tau^{1/10}] + \Pr[r_v < \tau^{1/10}]\\
&= \prod_{u\in L_v} \Pr[u\notin \mathcal{I} \mid r_v \ge \tau^{1/10}] + \tau^{1/10}\\
&\le \prod_{u\in L_v} (1 - p_v\beta/10^7) + \tau^{1/10}\\
&\le e^{-d_v\beta/(\ell^2 10^9)} + \tau^{1/10}\\
&\le 2e^{-d_v\beta/(\ell^2 10^9)}\\
&\le 2e^{-(\log^*(1/\tau))/(\ell^4 10^9)}\\
&\le 2e^{-(\log^* \Delta)^{24/25}/10^{10}}
\end{align*}

where the second equality holds by conditional independence of the decisions of the neighbors of $v$ given the value of $v$ not being in any interval, which also uses the fact that the graph has girth at least 7. The second inequality follows from Lemma \ref{lem:better-inclusion-weighted-logstar}, and the third inequality follows from Proposition \ref{prop:exp-approx}.
\end{proof}


\subsection{The Full Algorithm for Proving Theorems~\ref{thm:main-girth} and~\ref{thm:trees}}\label{sec:full}
In this section, we present the full algorithm for proving Theorems~\ref{thm:main-girth} and~\ref{thm:trees}. The main idea is to combine our two-round algorithm from Lemma~\ref{lem:weighted-logstar} with Ghaffari's framework~\cite{Ghaffari16} (with some modifications). A brief overview of Ghaffari's algorithm is provided in Section~\ref{sec:TO}. Throughout this section, $\ell$ and $\tau$ are defined as in Lemma~\ref{lem:weighted-logstar}. 

~\\\noindent\textbf{Our Algorithm:} Our algorithm has a similar structure to Ghaffari's, with the following modifications: The weights $\{p^{(0)}_v\}_{v \in G(v)}$ are initialized to $\tau$ instead of $1/2$. Then, in each step $t$, we find an independent set $\mathcal{I}^{(t)}$ by using the two-round algorithm from Lemma~\ref{lem:weighted-logstar} (i.e., Algorithm~\ref{alg:weightedInclusion}). From round to round, the weights are updated as follows. Let $d_v^{(t)}:=\sum_{u\sim v} p_u^{(t)}$.

\[
p^{(t+1)}_v = \begin{cases}
p^{(t)}_v / \ell & \text{if } d^{(t)}_v > \ell,\\[8pt]
    \min\{\ell p_v^{(t)}, \tau\}, & \text{if } d^{(t)}_v \leq \ell
\end{cases}
\]

We show that after running this algorithm for $T=O(\log_{\ell} \Delta)$ rounds, the graph shatters into two parts, $H$ and $H'$. In $H$, the size of each connected component is small, and one can use deterministic algorithms to find an MIS in $H$ by using the same approach as in~\cite{Ghaffari16,BarenboimEPS16}. In $H'$, the maximum degree is $O(\ell / \tau)$. Therefore, after removing the nodes in the MIS of $H$ along with their neighbors, we can find an MIS in $H'$ efficiently by using a deterministic algorithm for low-degree graphs. This results in an MIS for the entire graph. 

~\\\noindent\textbf{Roadmap of the proofs:} The pre-shattering phase is given in Algorithm~\ref{alg:preshattering}, and the post-shattering phase is discussed in Lemmas~\ref{lem:golden-rounds},~\ref{lem:ruling-survival}, \ref{lem:shattering-small-components}, and~\ref{lem:deg-bound}. Before proving these lemmas, we introduce two key definitions. In Definition~\ref{def:golden}, we define the notion of \emph{golden iteration}, which is analogous to the \emph{golden rounds} terminology in~\cite{Ghaffari16}. By Lemma~\ref{lem:weighted-logstar}, a node survives with subconstant probability in a golden iteration. In Definition~\ref{def:deg}, we define the notion of \emph{degree reduction iteration}. Lemma~\ref{lem:golden-rounds} shows that each node either experiences $O(\log_{\ell}\Delta)$ golden iterations, or $O(\log_{\ell}\Delta)$ degree reduction iterations. 

Let $H$ be the graph induced by the set of nodes that experience $O(\log_{\ell}\Delta)$ golden iterations, and let $H'$ be the set of nodes induced by those who experience $O(\log_{\ell}\Delta)$ degree reduction iterations.  Lemma~\ref{lem:ruling-survival} implies that $H$ is shattered into small connected components. On the other hand, Lemma~\ref{lem:deg-bound} shows that $H'$ has maximum degree $O(\ell/\tau)$. We put everything together in the formal proof of Theorem~\ref{thm:main-girth}, which is provided right after the proof of Lemma~\ref{lem:deg-bound}. The proof of Theorem~\ref{thm:trees} is provided right after the proof of Theorem~\ref{thm:main-girth}.






\begin{algorithm}[H]
\caption{Pre-Shattering$(G)$}
\begin{algorithmic}[1]
\State Initialize $\mathcal{I} \gets \emptyset$, $G^{(0)}\gets G$, and $T \gets 10^6(\log \Delta)/(\log \ell)$.

\State Initialize $p^{(0)}_v\gets \tau$ for all $v\in V(G)$, and $d^{(0)}_v\gets \sum_{u\sim v} p^{(0)}_u$.

\For{$t \in [T]$}
    \State $\mathcal{I}^{(t)} \gets \text{Inclusion}(G^{(t-1)}, p^{(t-1)})$  \hspace{25pt} // \hspace{15pt} I.e., call Algorithm~\ref{alg:weightedInclusion}
    \State $\mathcal{I} \gets \mathcal{I} \cup \mathcal{I}^{(t)}$
    \State $G^{(t)} \gets G^{(t-1)}$ with all the nodes in $I^{(t)}$ and their neighbors deleted
    \For{all $v \in V(G^{(t)})$}
    \vspace{5pt}
      \State  \( 
p_v^{(t)} \gets \left\{ 
   \begin{array}{ll}
      p_v^{(t-1)}/\ell & \text{if } d_v^{(t-1)} > \ell, \\[5pt]
      \min(\ell p_v^{(t-1)}, \tau) & \text{if } d_v^{(t-1)} \leq \ell
   \end{array}
\right.
\)\vspace{5pt}
    \State $d^{(t)}_v\gets \sum_{u\sim v: u\in V(G^{(t)})} p^{(t)}_u$
    \EndFor
\EndFor

\State \Return $\mathcal{I}$
\end{algorithmic}\label{alg:preshattering}
\end{algorithm}


\begin{defn}\label{def:golden}\textbf{(Golden Iterations)}\\
    For a node $v\in V(G^{(t)})$, let $L^{(t)}_v$ be the set of neighbors $u\in V(G^{(t)})$ of $v$ for which $d^{(t)}_u\leq \ell$. We say that an iteration $t'$ of the for loop in Line 3 in Algorithm~\ref{alg:preshattering} is a golden iteration for a node $v$ if $d^{(t')}_v\geq 1/\ell$ and $\sum_{u\in L^{(t')}_v} p^{(t')}_u\geq d^{(t')}_v/(100\ell^2)$. We denote by $g^{(t)}_v$ be the number of golden iterations $t'\in [t]$ for $v$.
\end{defn}


\begin{defn}\label{def:deg}\textbf{(Degree Reduction Iterations)}\\
    We say that an iteration $t'$ is a degree reduction iteration for a node $v$ if $p^{(t')}_v = p^{(t'-1)}_v = \tau$. We denote by $h^{(t)}_v$ the number of degree reduction iterations $t'\in [t]$ for $v$.
\end{defn}

\begin{lem}\label{lem:golden-rounds}
Deterministically, for every vertex $v\in V(G^{(T)})$, $T \le h_v^{(T)} + 3g_v^{(T)} + 6\log_\ell \Delta$.
\end{lem}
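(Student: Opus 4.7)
The plan is to prove the bound deterministically via a potential argument on the trajectory of $p_v^{(t)}$. Define $\phi_t := \log_\ell(\tau/p_v^{(t)})$, which is nonnegative and starts at $\phi_0 = 0$. By the update rule in Algorithm~\ref{alg:preshattering}, each iteration $t \in [T]$ falls into exactly one of four types: a \emph{descent} (D), when $d_v^{(t-1)} > \ell$, giving $\phi_t - \phi_{t-1} = +1$; a \emph{full ascent} (A), when $d_v^{(t-1)} \le \ell$ and $p_v^{(t-1)} < \tau/\ell$, giving $\phi_t - \phi_{t-1} = -1$; an \emph{ascent to cap} (R), when $d_v^{(t-1)} \le \ell$ and $p_v^{(t-1)} \in [\tau/\ell, \tau)$, taking $\phi_t = 0$ from $\phi_{t-1} \in (0, 1]$; and \emph{stable} (S), when $d_v^{(t-1)} \le \ell$ and $p_v^{(t-1)} = \tau$, which is precisely the degree reduction case. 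Let $n_D, n_A, n_R$ denote the first three counts and $n_S = h_v^{(T)}$. The goal is then to show $n_D + n_A + n_R \le 3 g_v^{(T)} + 6\log_\ell \Delta$.

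The first step is a short counting argument using only the nonnegativity of $\phi$. Summing the per-iteration changes yields $\phi_T = n_D - n_A - \sum_{t \in R}\phi_{t-1} \ge 0$, which gives $n_A \le n_D$. Moreover, each R iteration closes a maximal positive excursion of $\phi$, and such an excursion must have been opened by a D iteration starting from $\phi_{t-1} = 0$ (equivalently, $p_v^{(t-1)} = \tau$); since the number of such openings is at most $n_D$, we get $n_R \le n_D$. Combining these bounds, $T - h_v^{(T)} = n_D + n_A + n_R \le 3 n_D$.

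The second step charges descent iterations to golden iterations. For a descent at time $t$, we have $d_v^{(t-1)} > \ell \ge 1/\ell$, so the first condition of Definition~\ref{def:golden} automatically holds at time $t-1$; if the second condition $\sum_{u \in L_v^{(t-1)}} p_u^{(t-1)} \ge d_v^{(t-1)}/(100\ell^2)$ also holds, then $t-1$ is a golden iteration, and since the map $t \mapsto t-1$ is injective, the number of such ``good'' descents is at most $g_v^{(T)}$. The remaining task, and the main obstacle of the proof, is to bound the ``bad'' descents — those for which the second condition fails at $t-1$ — by at most $2\log_\ell \Delta$. The intuition is that failure of the second condition means most of $d_v$ comes from ``heavy'' neighbors ($u$ with $d_u^{(t-1)} > \ell$), whose weights will be divided by $\ell$ in the very next iteration; tracking a secondary potential on this heavy mass should show that each bad descent scales it down by a factor of $\ell$, and since the heavy mass is bounded between essentially $1$ and $\Delta \cdot \tau$, only $O(\log_\ell \Delta)$ bad descents are possible before the second condition is forced to hold. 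Combining this with the first step yields $T - h_v^{(T)} \le 3 n_D \le 3 g_v^{(T)} + 6\log_\ell \Delta$, as desired.
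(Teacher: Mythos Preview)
Your first step — the potential argument on $\phi_t = \log_\ell(\tau/p_v^{(t)})$ — is correct and is essentially the same device the paper uses at the end of its proof (the paper phrases it as $p_v^{(T)} \le \tau$ forcing $T - h_v^{(T)} \le $ a multiple of $e_v^{(T)}$, where $e_v^{(T)}$ is your $n_D$). Note that since $p_v^{(0)}=\tau$ and every update multiplies or divides by exactly $\ell$, $p_v^{(t)}$ always lies in $\{\tau\ell^{-k}:k\ge 0\}$, so your case R is in fact vacuous and $n_A+n_R\le n_D$ already follows from $\phi_T\ge 0$.

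The second step, however, has a genuine gap. Your target inequality ``\# bad descents $\le 2\log_\ell\Delta$'' is false, and the heavy–mass heuristic you sketch does not lead to a proof. The problem is that the heavy mass is not monotone: it can be fully replenished between bad descents. Concretely, one can arrange all neighbours $u$ of $v$ to oscillate so that $d_u^{(s)}>\ell$ at even $s$ and $d_u^{(s)}\le\ell$ at odd $s$; then $p_u$ oscillates by a factor of $\ell$, hence $d_v$ oscillates between (say) $2\ell$ and $2$. Every odd $t$ is a descent ($d_v^{(t-1)}=2\ell$), and at $t-1$ all neighbours are heavy so $L_v^{(t-1)}=\emptyset$ and the second golden condition fails — so \emph{every} descent is bad, and there are $T/2$ of them. (In this same example $g_v^{(T)}=T/2$ as well, since the odd iterations are golden; so the lemma itself is not violated, only your intermediate claim.) More generally, your map $t\mapsto t-1$ looks at the wrong time slice: a bad descent at $t$ forces $d_v^{(t)}\le (2/\ell)d_v^{(t-1)}$, but that says nothing about $t-1$ being golden, and the ``savings'' from that drop can be undone by a single subsequent golden step or by neighbours swapping between light and heavy.

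The paper avoids this by never trying to bound bad descents globally. Instead it works directly with $d_v$ and analyses each \emph{maximal} interval on which $d_v^{(s)}>\ell$. Within such an interval (together with the one step preceding it), it uses the dichotomy ``non-golden step $\Rightarrow d_v$ drops by $\approx\ell$; golden step $\Rightarrow d_v$ rises by at most $\ell$'' to conclude that the interval's length is at most $3$ times the number of golden iterations it contains; the extra $O(\log_\ell\Delta)$ accounts only for the very first interval, where $d_v$ may start as large as $\Delta\tau$. Summing over intervals gives $n_D\le 3g_v^{(T)}+O(\log_\ell\Delta)$, and then your first step finishes. If you want to salvage your charging scheme, you must confine it to a single $d_v>\ell$ run — globally there is no bound on bad descents independent of $g_v^{(T)}$.
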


\begin{proof}
Let $N^{(t)}_v\subseteq V(G^{(t)})$ be the set of surviving neighbor of $v$ in the $t$'th iteration, and recall that $L^{(t)}_v$ is the set of neighbors $u\in N^{(t)}_v$ of $v$ for which $d^{(t)}_u\leq \ell$. First, we show that for every $t$ that is not a golden iteration  (i.e., $g_v^{(t)} = g_v^{(t-1)}$) and $d_v^{(t)} > 1/\ell$, $d_v^{(t+1)} \le 2d_v^{(t)}/\ell$. This is because, for such an iteration $t$, we have that:

\begin{align*}
d_v^{(t+1)} &= \sum_{u\in N^{(t+1)}_v} p_u^{(t+1)}\\
&\leq \sum_{u\in N^{(t)}_v} p_u^{(t+1)}\\
&= \sum_{u\in L_v^{(t)}} p_u^{(t+1)} + \sum_{u\in N^{(t)}_v\setminus L_v^{(t)}} p_u^{(t+1)}\\
&\le \sum_{u\in L^{(t)}_v} \ell p_u^{(t)} + \sum_{u\in N^{(t)}_v\setminus L_v^{(t)}} \frac{p_u^{(t)}}{\ell}\\
&< \ell\frac{d_v^{(t)}}{100\ell^2} + \frac{d_v^{(t)}}{\ell}\\
&\le \frac{2d_v^{(t)}}{\ell}
\end{align*}

For any $t$, the update rule for $p^{(t)}_v$ ensures that $d_v^{(t+1)} \le \ell d_v^{(t)}$. Hence, we can upper bound the number of $t\in [T]$ for which $d_v^{(t)} > \ell$, as follows. For any $t\in [T]$, let $e_v^{(t)}$ denote the number of $s\le t$ for which $d_v^{(s)} > \ell$. Consider two $a \le b\in \{2,3,\hdots,T\}$ with the property that $d_v^{(s)} > \ell$ for all $s$ with $a\le s\le b-1$, but with $d_v^{(a-1)} \le \ell$ as well. Observe that $d_v^{(a-1)} > 1/\ell$, so for all $s$ with $a-1 \le s \le b-1$ we have that:

$$d_v^{(s+1)} \le (2/\ell)^{1 - (g_v^{(s)} - g_v^{(s-1)})} \ell^{g_v^{(s)} - g_v^{(s-1)}} d_v^{(s)}$$ 

Thus,

$$\ell < d_v^{(b)} \le (2/\ell)^{b+1-a-(g_v^{(b-1)}-g_v^{(a-2)})}\ell^{g_v^{(b-1)}-g_v^{(a-2)}}d_v^{(a-1)} \le \ell^{(3/2)(g_v^{(b-1)}-g_v^{(a-2)}) - (b+1-a)/2}\ell$$

which implies that $b+1-a \le 3(g_v^{(b-1)}-g_v^{(a-2)})$. Furthermore, the number of iterations it takes to get out of the first interval $[0,\cdots, a]$ in which $d^{a'}_v> \ell$ for all $a'\in [a]$ is at most $\log_{\ell}(d^{0}_v)\leq \log_{\ell}(\Delta\cdot \tau)\leq 6\log_{\ell}\Delta$. Hence, summing over all possible intervals $[a,\cdots,b]$ for which $d^{a'}_v> \ell$ for every $a'\in [a]$ yields that $e_v^{(T)} \le 3g_v^{(T)} + 6\log_\ell \Delta$.

Next, we use the bound on $e_v^{(T)}$ to bound $T$. Each iteration counted by $e_v^{(T)}$ decreases $p_v^{(t)}$ by a factor of exactly $\ell$, while each iteration besides the ones counted by $h_v^{(T)}$ increases $p_v^{(t)}$ by a factor of exactly $\ell$. All other iterations do not change $p_v^{(t)}$. Thus,

$$\tau \ge p_v^{(T)} = \ell^{T - h_v^{(T)} - e_v^{(T)}}p_v^{(0)} = \ell^{T - h_v^{(T)} - e_v^{(T)}}\tau$$

Thus, $T - h_v^{(T)} - e_v^{(T)} \le 0$, which means that $T \le h_v^{(T)} + 3g_v^{(T)} + 6\log_\ell \Delta$ as desired.
\end{proof}

The following lemma is analogous to Lemma 4.1 in~\cite{Ghaffari16} and Lemma 3.2 in~\cite{BarenboimEPS16}. For completeness, we give a detailed proof in Appendix~\ref{app:diff}.

\begin{lem}\label{lem:ruling-survival}
Let $X$ be a set of vertices in $G$ for which the distance between any pair of vertices in $X$ within $G$ is at least 10. Then,

$$\Pr[\forall v\in X : v\in V(G^{(T)}) \text{ and } g_v^{(T)} \ge T/50] \le \gamma^{|X|T/50}$$

where $\gamma := 2e^{-(\log^*\Delta)^{24/25}/10^{10}}$.
\end{lem}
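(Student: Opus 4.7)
The plan is to use a supermartingale argument built on two ingredients: the per-iteration survival bound from Lemma~\ref{lem:weighted-logstar}, and a conditional-independence property across $v \in X$ induced by the distance-$10$ hypothesis.

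First I would introduce the filtration $\mathcal{F}^{(t)}$ generated by all the ranks Algorithm~\ref{alg:preshattering} draws during iterations $1,\dots,t$. All of $G^{(t)}, p^{(t)}, d^{(t)}$ are $\mathcal{F}^{(t)}$-measurable, and I will interpret ``iteration $t$ is golden for $v$'' as a condition on the state entering iteration $t$ (so it is $\mathcal{F}^{(t-1)}$-measurable; this is the natural reading in light of how Lemma~\ref{lem:weighted-logstar} is invoked). For each $t \in [T]$ I set
\[
A_t := \{v \in X : v \in V(G^{(t-1)}) \text{ and iteration } t \text{ is golden for } v\},
\]
which is $\mathcal{F}^{(t-1)}$-measurable.

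Next I would establish two per-iteration facts. First, for each $v \in A_t$ the goodness hypothesis of Lemma~\ref{lem:weighted-logstar} holds for $(G^{(t-1)}, p^{(t-1)})$, and $G^{(t-1)}$ has girth at least $7$ as an induced subgraph of $G$, so $\Pr[v \in V(G^{(t)}) \mid \mathcal{F}^{(t-1)}] \leq \gamma$ (the complement of Lemma~\ref{lem:weighted-logstar}'s event is precisely ``$v$ is not removed at iteration $t$''). Second, since Algorithm~\ref{alg:weightedInclusion} is a $2$-round local algorithm, whether $v$ is removed at iteration $t$ is determined by the ranks drawn at iteration $t$ inside the $3$-ball of $v$ in $G^{(t-1)}$ (one extra hop captures the possibility that a neighbor of $v$ joins $\mathcal{I}^{(t)}$). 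Vertex removals can only increase graph distances, so for distinct $v, v' \in X$ these $3$-balls lie in disjoint subsets of $V(G)$ and therefore depend on disjoint, hence independent, fresh random ranks. Combining these two facts yields
\[
\Pr\!\Big[\, \forall v \in A_t: v \in V(G^{(t)}) \,\Big|\, \mathcal{F}^{(t-1)} \,\Big] \leq \gamma^{|A_t|}.
\]

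I would then define $C_t := \bigcap_{v \in A_t}\{v \in V(G^{(t)})\}$ and the process
\[
W_t := \mathbf{1}\!\left[\bigcap_{s \leq t} C_s\right]\cdot \gamma^{-\sum_{s \leq t}|A_s|}, \qquad W_0 = 1.
\]
Since $\sum_{s\leq t}|A_s|$ is $\mathcal{F}^{(t-1)}$-measurable, the previous display gives $\mathbb{E}[W_t \mid \mathcal{F}^{(t-1)}] \leq W_{t-1}$, so $(W_t)$ is a nonnegative supermartingale with $\mathbb{E}[W_T] \leq 1$. Letting $E$ denote the event in the lemma, I would note that on $E$ every $v \in X$ remains in $V(G^{(t)})$ for all $t \leq T$, so $E \subseteq \bigcap_{t\leq T} C_t$, and moreover
\[
\sum_{t=1}^{T}|A_t| = \sum_{v \in X} g_v^{(T)} \geq |X|T/50.
\]
Hence $W_T \geq \gamma^{-|X|T/50}$ on $E$, and therefore $\Pr[E] \leq \gamma^{|X|T/50}\cdot \mathbb{E}[W_T] \leq \gamma^{|X|T/50}$, as required.

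The main obstacle will be the conditional-independence step: even though both $G^{(t-1)}$ and $p^{(t-1)}$ are random and depend on the entire prior history, the radius of influence of a single iteration is only $3$ (because Algorithm~\ref{alg:weightedInclusion} is $2$-local and removal adds one more hop), and distances in $G^{(t-1)}$ dominate distances in $G$. So the distance-$10$ hypothesis in $G$ rigorously decouples the $3$-balls around distinct elements of $X$ in $G^{(t-1)}$. Once this is carefully checked, the rest reduces to a routine product-martingale computation. A secondary care point is aligning the paper's indexing of ``golden iteration'' with the filtration so that goldenness is revealed before the randomness it conditions, but this is only a cosmetic re-indexing.
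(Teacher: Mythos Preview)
Your proof is correct. Both your argument and the paper's rest on the same two ingredients: the per-iteration survival bound for good vertices from Lemma~\ref{lem:weighted-logstar}, and the conditional independence across $X$ afforded by the distance-$10$ hypothesis together with the radius-$3$ locality of a single call to \textsc{Inclusion}. Where you differ is in the packaging. The paper decomposes the target event as a sum over all possible ``golden histories'' $(Z^{(t)})_{t=1}^T \in \mathcal{Z}$ and then runs a backward induction in $t$, at each step using conditional independence to extract a factor of $\gamma^{|P^{(t+1)}|}$ and collapsing the sum over the last coordinate. You instead compress the entire induction into the single nonnegative supermartingale $W_t = \mathbf{1}[\cap_{s\le t} C_s]\,\gamma^{-\sum_{s\le t}|A_s|}$ and finish with a one-line Markov step. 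Your route is shorter and avoids the explicit enumeration over $\mathcal{P}^{(t)}$; the paper's route makes the union-bound-over-histories structure explicit. Your remark about the indexing of ``golden iteration'' relative to the filtration is accurate and matches what the paper does implicitly (there $W^{(t+1)}$ is declared to depend only on the randomness of the first $t$ iterations), so the re-indexing is indeed cosmetic.
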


The following lemma is analogous to Lemma 4.2 in~\cite{Ghaffari16} and the discussion after it, and Lemma 3.3 and the discussion in Section 3.2 in~\cite{BarenboimEPS16}. For completeness, we give a full proof in Appendix~\ref{app:diff}.

\begin{lem}\label{lem:shattering-small-components}
    Let $Y: = \{ v \in V(G^{(T)}) : g_v^{(T)} \ge T / 50 \}$, and let $H$ be the graph induced by the nodes in $Y$. There is a $\poly(\log\log n)$-round algorithm that finds an MIS in $H$ with high probability.
\end{lem}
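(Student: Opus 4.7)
The plan is to show, via a shattering-style union bound, that with high probability $H$ decomposes into connected components of size at most $\poly(\log n)$, and then to finish the MIS on each component in parallel using a black-box deterministic algorithm (e.g.\ \cite{RozhonG20,ecomposition}).

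The starting point is Lemma~\ref{lem:ruling-survival}. For any set $X\subseteq V(G)$ of size $s$ in which every pair of vertices is at $G$-distance at least $10$, the lemma gives $\Pr[X\subseteq Y]\le \gamma^{sT/50}$. The number of such sets is at most $n\cdot \Delta^{10(s-1)}$, since once an initial vertex is fixed (in at most $n$ ways), each subsequent vertex lies in the ball of $G$-radius $10$ around some already-chosen vertex, of which there are at most $\Delta^{10}$. A union bound therefore yields $\Pr[\exists\,X\subseteq Y,\,|X|=s]\le n(\Delta^{10}\gamma^{T/50})^{s}$. Plugging in $T=10^{6}\log\Delta/\log\ell$, one sees that $\log(1/\gamma^{T/50})=\Omega(\log\Delta\cdot(\log^{*}\Delta)^{24/25}/\log\log^{*}\Delta)$ dwarfs $10\log\Delta$, so $\Delta^{10}\gamma^{T/50}$ is super-polynomially small in $\Delta$ and the union bound drops below $1/n^{10}$ for some $s_{0}=O(\log n/\log\Delta)$.

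Conditioning on this high-probability event, a greedy extraction (pick any vertex in a connected component of $H$, remove the $G$-ball of radius $10$ around it, and repeat) shows that every connected component of $H$ has size at most $s_{0}\Delta^{10}$. Applying the deterministic MIS algorithm of \cite{RozhonG20,ecomposition} in parallel to each component then completes the MIS in $\poly(\log(s_{0}\Delta^{10}))=\poly(\log\log n+\log\Delta)$ rounds. In the regime $\Delta\le 2^{\poly(\log\log n)}$ this is already $\poly(\log\log n)$, matching the claim; in the complementary regime the extra $\poly(\log\Delta)$ cost is absorbed into the pre-shattering term $\log\Delta/\log(\log^{*}\Delta)$ of Theorem~\ref{thm:main-girth}, exactly as in the analogous arguments of \cite{Ghaffari16,BarenboimEPS16}.

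The main technical hurdle will be the parameter bookkeeping in the first step, namely verifying that $\gamma^{T/50}$ is genuinely super-polynomially small in $\Delta$ so that it beats the $\Delta^{10}$ arising from counting balls of radius $10$, while simultaneously keeping $T$ short enough that the pre-shattering cost in Theorem~\ref{thm:main-girth} is not inflated.
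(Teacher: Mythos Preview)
Your overall outline follows the standard shattering template, but there is a genuine gap in the post-shattering step. Bounding the connected components of $H$ by $s_0\Delta^{10}$ and then running a black-box deterministic MIS on each component does \emph{not} give $\poly(\log\log n)$ rounds: the best deterministic MIS on $N$-node graphs takes $\tilde O(\log^{5/3} N)$ rounds, so with $N=\Theta(\Delta^{10}\log_\Delta n)$ you pay $\tilde O(\log^{5/3}\Delta)$. This is strictly larger than $\log\Delta$, so contrary to your claim it is \emph{not} absorbed by the pre-shattering term $\log\Delta/\log(\log^*\Delta)$ of Theorem~\ref{thm:main-girth}, and it certainly does not prove the lemma as stated. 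The paper avoids this by an essential extra step that your sketch omits: it first computes a $(10,O(\log\log n))$-ruling set $R$ of $H$, forms the cluster graph $H_R$, and shows that the connected components of $H_R$ (not of $H$) have only $O(\log_\Delta n)$ vertices. Network decomposition is then run on $H_R$---where each step is simulated in $H$ with $O(\log\log n)$ overhead---yielding an $(O(\log^2\log n),O(\log\log n))$ network decomposition of $H$ and hence an MIS in $\poly(\log\log n)$ rounds regardless of $\Delta$. The ruling-set detour is precisely what kills the $\Delta^{O(1)}$ factor that your direct component-size bound incurs.

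There is also a smaller inconsistency in your union-bound step: you want $X$ to have all pairwise $G$-distances at least $10$ (so that Lemma~\ref{lem:ruling-survival} applies), yet your counting argument asserts that ``each subsequent vertex lies in the ball of $G$-radius $10$ around some already-chosen vertex,'' which is the opposite constraint. What you actually need (and what your greedy extraction in fact delivers) is a set that is simultaneously $10$-separated \emph{and} connected in $G^{21}$: since every vertex of the $H$-component lies within distance $10$ of some extracted vertex and the component is connected, consecutive extracted vertices along any $H$-path are at $G$-distance at most $21$. The correct count is therefore over connected subsets of $G^{21}$, giving roughly $n\cdot\Delta^{21s}$ candidates; this is easy to fix, but the exponent should reflect the right radius.
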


\begin{lem}\label{lem:deg-bound} Let $Y: = \{ v \in V(G^{(T)}) : g_v^{(T)} \ge T / 50 \}$. Deterministically the graph induced by 
$V(G^{(T)})\setminus Y$ has maximum degree $10\ell/\tau$.
\end{lem}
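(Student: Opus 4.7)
The plan is to argue by contradiction. Fix a vertex $v \in V(G^{(T)}) \setminus Y$, let $N$ denote its neighbors in $V(G^{(T)}) \setminus Y$, and suppose $|N| > 10\ell/\tau$. The goal is to exhibit a single iteration $t^\star \in [T]$ at which (a) $t^\star$ is a degree reduction iteration for $v$, and (b) many $u \in N$ simultaneously have $p_u^{(t^\star - 1)} = \tau$. Combined, these force $d_v^{(t^\star - 1)} \gg \ell$, contradicting the fact that on a degree reduction iteration of $v$ the update rule of Algorithm~\ref{alg:preshattering} can only keep $p_v$ equal to $\tau$ when $d_v^{(t^\star - 1)} \le \ell$.

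First I would apply Lemma~\ref{lem:golden-rounds} to $v$ and to each $u \in N$. Since $v, u \in V(G^{(T)}) \setminus Y$, we have $g_v^{(T)}, g_u^{(T)} < T/50$, hence
\[
h_v^{(T)} \;\ge\; T - 3 g_v^{(T)} - 6 \log_\ell \Delta \;\ge\; (47/50 - o(1))\,T,
\]
and the same lower bound holds for every $h_u^{(T)}$, where the additive $6\log_\ell \Delta$ term is absorbed using $T = 10^6 \log_\ell \Delta$. Let $H_w \subseteq [T]$ denote the set of degree reduction iterations of a vertex $w$, so $|H_v|, |H_u| \ge (47/50 - o(1))T$.

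The core step is a double-counting argument. For each $u \in N$, inclusion-exclusion inside $[T]$ yields $|H_v \cap H_u| \ge h_v^{(T)} + h_u^{(T)} - T \ge (44/50 - o(1))\,T$. Summing over $u \in N$ and swapping the order of summation,
\[
\sum_{t \in H_v} \bigl|\{u \in N : t \in H_u\}\bigr| \;=\; \sum_{u \in N} |H_v \cap H_u| \;\ge\; (44/50 - o(1))\,|N|\,T.
\]
Since $|H_v| \le T$, averaging produces some $t^\star \in H_v$ with $|\{u \in N : t^\star \in H_u\}| \ge (44/50 - o(1))\,|N| \ge 4|N|/5$. For every such $u$, $p_u^{(t^\star - 1)} = \tau$ by definition of $H_u$, and because $u \in V(G^{(T)})$, $u$ is still alive in $G^{(t^\star - 1)}$, so it contributes $\tau$ to $d_v^{(t^\star - 1)}$. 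Hence $d_v^{(t^\star - 1)} \ge \tau \cdot 4|N|/5 > 8\ell$. But $t^\star \in H_v$ means $p_v^{(t^\star - 1)} = p_v^{(t^\star)} = \tau$, which via the update rule of Algorithm~\ref{alg:preshattering} forces $d_v^{(t^\star - 1)} \le \ell$, a contradiction.

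The main obstacle is simply calibrating the constants so that the averaging survives both Lemma~\ref{lem:golden-rounds} and inclusion-exclusion: one needs the $T/50$ threshold defining $Y$, the factor $3$ in Lemma~\ref{lem:golden-rounds}, and the $6\log_\ell \Delta$ additive error to combine into something strictly larger than $T/2$, which the choice $T = 10^6 \log_\ell \Delta$ provides with substantial slack. Everything in this argument is deterministic, relying only on Lemma~\ref{lem:golden-rounds}, the definition of $Y$, and the deterministic update rule for the weights $p_v^{(t)}$.
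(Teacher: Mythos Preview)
Your argument is correct and follows essentially the same approach as the paper. Both proofs apply Lemma~\ref{lem:golden-rounds} to conclude that $v$ and each neighbor $u \in N$ have at least (roughly) $9T/10$ degree reduction iterations, and then use the same double-counting identity $\sum_{u\in N}|H_v\cap H_u| = \sum_{t\in H_v}|\{u\in N:t\in H_u\}|$; the only cosmetic difference is that the paper packages the contradiction as incompatible upper and lower bounds on the aggregate $S=\sum_{t\in H_v} d_v^{(t-1)}$, whereas you average to extract a single iteration $t^\star$ with $d_v^{(t^\star-1)}>\ell$.
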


\begin{proof}
Suppose, for the sake of contradiction, that there exists $v\in V(G^{(T)})\setminus Y$ that has at least $z = 10^8\ell/\tau$ neighbors $u_1,u_2,\hdots,u_z$ for which $u_i \in V(G^{(T)})\setminus Y$ for all $i\in [z]$. By Lemma \ref{lem:golden-rounds}, $h_v^{(T)} \ge 9T/10$ and $h_{u_i}^{(T)} \ge 9T/10$ for all $i\in [z]$. We now will obtain contradictory bounds for the quantity

$$S = \sum_{t\in [T]:h_v^{(t)}\ne h_v^{(t-1)}} d_v^{(t-1)}$$

First, we obtain an upper bound. Recall that $h_v^{(t)} \ne h_v^{(t-1)}$ implies that $d_v^{(t-1)} \le \ell$ and $p_v^{(t)} = p_v^{(t-1)} = \tau$. By the first implication,

$$S \le \sum_{t\in [T]:h_v^{(t)}\ne h_v^{(t-1)}} \ell = h_v^{(T)}\ell \le T\ell$$

Next, we obtain a lower bound. For each $u_i$, there are at least $T - T/10 - T/10 = 4T/5$ values of $t$ for which both $h_v^{(t)}\ne h_v^{(t-1)}$ and $h_{u_i}^{(t)}\ne h_{u_i}^{(t-1)}$. Thus,

$$S \ge \sum_{i=1}^z \sum_{t\in [T]:h_v^{(t)}\ne h_v^{(t-1)} \text{ and } h_{u_i}^{(t)}\ne h_{u_i}^{(t-1)}} p_{u_i}^{(t-1)} \ge z (4T/5)\tau > T\ell$$

a contradiction, as desired.
\end{proof}

\begin{proof}[Proof of Theorem \ref{thm:main-girth}]

We run the pre-shattering algorithm (Algorithm~\ref{alg:preshattering}) for $T=O(\log_{\ell}\Delta)$ rounds. By Lemma~\ref{lem:golden-rounds}, each node either experiences $O(T)$ golden iterations or $O(T)$ degree reduction iterations. By Lemma~\ref{lem:shattering-small-components}, there is a $\poly(\log\log n)$-round algorithm that finds an MIS in the graph induced by the nodes who experience $O(T)$ golden iterations. By Lemma~\ref{lem:deg-bound} we can then use the deterministic $O(\Delta'+\log^*n)$ algorithm by~\cite{Kuhn09,BarenboimE09,BarenboimEK14}, where $\Delta'=O(\ell/\tau)\ll \poly(\log\log n)$, in the remaining graph induced by the nodes who experience $O(\log_{\ell}\Delta)$ degree reduction iterations, which takes $\ll \poly(\log\log n)$ rounds. 
\end{proof}

Before proving Theorem~\ref{thm:trees}, let us recall the following theorem from~\cite{BarenboimEPS16}.

\begin{thm}[Theorem 7.2 in~\cite{BarenboimEPS16}]\label{thm:deg-arb}
Let \( G \) be a graph of arboricity \( \lambda \) and maximum degree \( \Delta \), and let \( t \geq \max \{(5\lambda)^8, (4(c+1) \ln n)^7\} \) be a parameter. In \( O(\log_t \Delta) \) rounds, we can find an independent set \(\mathcal{I}\) such that, with probability at least \( 1 - n^{-c} \), the maximum degree in the graph induced by \( V(G) \setminus \mathcal{I} \) is at most \( t\lambda \).

\end{thm}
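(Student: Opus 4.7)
The plan is to produce an independent set $\mathcal{I}$ iteratively, over $T = O(\log_t \Delta)$ phases, where each phase reduces the current maximum degree by a factor of about $t^{\Theta(1)}$, so that after all phases the residual maximum degree is at most $t\lambda$. In each phase $i$, I would use a simple one-round randomized procedure: every vertex $v$ in the current graph $G_i$ independently becomes a \emph{candidate} with a probability $\theta_i$ calibrated to the current degree bound $\Delta_i$; candidates whose neighborhood contains no other candidate are added to $\mathcal{I}_i$; and then $\mathcal{I}_i$ together with $N(\mathcal{I}_i)$ is removed from the graph.

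The degree reduction comes from exploiting arboricity. Fix a vertex $v$ whose current degree $\deg(v)$ is at least $\Delta_{i+1}$, and consider the random candidate set $S \subseteq N(v)$. By arboricity, the induced subgraph $G_i[N(v)]$ has at most $\lambda \deg(v)$ edges, so the expected number of candidate-candidate edges inside $N(v)$ is at most $\lambda \theta_i^2 \deg(v)$, a tiny fraction of the expected size $\theta_i \deg(v)$ of $S$ once $\theta_i \lambda \ll 1$. Combined with a separate bound on candidate-candidate edges leaving $N(v)$, this shows that a constant fraction of $S$ has no candidate neighbor at all and therefore joins $\mathcal{I}_i$. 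Each such addition removes an edge incident to $v$, so $v$ loses roughly $\theta_i \deg(v)$ edges, and the maximum degree after phase $i$ drops to $\Delta_{i+1} \approx (1 - \Theta(\theta_i)) \Delta_i$. Iterating $T = O(\log_t \Delta)$ times brings the degree down to $t \lambda$.

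For the high-probability statement, one applies Chernoff-style concentration to the count of neighbors of $v$ that join $\mathcal{I}_i$. Although the indicators ``$u \in \mathcal{I}_i$'' for $u \in N(v)$ are correlated, the dependency structure is short-range and arboricity-controlled, so a bounded-differences or read-$k$-family inequality suffices. The hypothesis $t \geq (4(c+1)\ln n)^7$ provides enough slack to push the per-vertex, per-phase failure probability below $n^{-(c+2)}$, and a union bound over $n$ vertices and $O(\log_t \Delta)$ phases gives the claimed $1 - n^{-c}$ bound. The condition $t \geq (5\lambda)^8$ is what keeps $\theta_i \lambda$ sufficiently small throughout, which the arboricity-based decoupling needs.

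The main obstacle I anticipate is precisely this decoupling: two candidate neighbors of $v$ share possible candidate neighbors, so their ``isolation'' events are not independent. Turning the expectation-level degree-reduction argument into a clean tail bound requires carefully covering the correlation graph using the low-out-degree orientation guaranteed by arboricity, and this is what drives the exponents $7$ and $8$ in the hypothesis on $t$. A secondary routine point is that arboricity is monotone under vertex deletion, so the bound $\lambda$ is preserved phase-by-phase and the algorithm does not need to recompute an arboricity certificate at each step.
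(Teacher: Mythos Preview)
The paper does not prove this statement; it is quoted from~\cite{BarenboimEPS16} and used as a black box in the proof of Theorem~\ref{thm:trees}. There is no in-paper argument to compare your sketch against.

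Evaluated on its own, your outline has two real gaps. First, removing $N(\mathcal I_i)$ after each phase is at odds with the target statement. The theorem bounds the maximum degree of $G[V(G)\setminus \mathcal I]$, and every vertex $v\in N(\mathcal I_i)\setminus \mathcal I$ still lies in $V(G)\setminus \mathcal I$. Once you drop such a $v$ from the process, nothing further reduces its residual degree; you only know that a $\Theta(\theta_i)$-fraction of its neighbors entered $\mathcal I_i$ in that single phase. Even for a vertex that survives all $T$ phases, your recursion tracks its degree in $G_T=G\setminus(\mathcal I\cup N(\mathcal I))$, which lower-bounds rather than upper-bounds its degree in $G[V(G)\setminus\mathcal I]$. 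Second, the clause ``combined with a separate bound on candidate-candidate edges leaving $N(v)$'' is where the real work hides, and arboricity alone does not deliver it. Arboricity controls edges \emph{inside} $N(v)$ via $|E(G[S])|\le\lambda|S|$, but the number of edges from $N(v)$ to the rest of the graph can be as large as $\deg(v)\cdot\Delta_i$. With uniform candidate probability $\theta_i$, the expected number of candidate--candidate boundary edges is $\theta_i^{2}\deg(v)\Delta_i$; making this $o(\theta_i\deg(v))$ forces $\theta_i\Delta_i\ll 1$, at which point the per-phase degree drop is below one edge and no $t^{\Theta(1)}$-factor reduction is possible. The usual remedy is to exploit the low-out-degree orientation \emph{in the joining rule itself} (a candidate defers only to its $O(\lambda)$ out-neighbors), not merely in the concentration step; without that change the expectation-level argument already fails.
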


Now we are ready to prove Theorem~\ref{thm:trees}. Let us restate it for convenience.

\thmtrees*
\begin{proof}[Proof of Theorem~\ref{thm:trees}]
    We first use Theorem~\ref{thm:deg-arb} with $t=2^{\sqrt{\log n\cdot \log(\log^* n)}}$. Since the arboricity $\lambda=1$ in trees, this implies that the maximum degree of the tree becomes $\Delta'\leq 2^{\sqrt{\log n\cdot \log(\log^* n)}}$ in $O(\sqrt{\log n/\log(\log^* n)})$ rounds. Thus, by using Theorem~\ref{thm:main-girth}, we can find an MIS in the remaining tree in $O(\log(\Delta')/\log(\log^*\Delta') + \poly(\log\log n)) = O(\sqrt{\log n/\log(\log^* n)})$ rounds, as desired.
\end{proof}

\paragraph{Acknowledgment:} We would like to thank Amir Abboud, Sepehr Assadi, and Shafi Goldwasser for several helpful discussions and comments.
\bibliographystyle{plain}
\bibliography{refs.bib}

\begin{thebibliography}{10}

\bibitem{AlonBI86}
Noga Alon, L{\'{a}}szl{\'{o}} Babai, and Alon Itai.
\newblock A fast and simple randomized parallel algorithm for the maximal
  independent set problem.
\newblock {\em J. Algorithms}, 7(4):567--583, 1986.

\bibitem{BalliuBHORS21}
Alkida Balliu, Sebastian Brandt, Juho Hirvonen, Dennis Olivetti, Mika{\"{e}}l
  Rabie, and Jukka Suomela.
\newblock Lower bounds for maximal matchings and maximal independent sets.
\newblock {\em J. {ACM}}, 68(5):39:1--39:30, 2021.

\bibitem{Balliu0KO21}
Alkida Balliu, Sebastian Brandt, Fabian Kuhn, and Dennis Olivetti.
\newblock Improved distributed lower bounds for {MIS} and bounded (out-)degree
  dominating sets in trees.
\newblock In Avery Miller, Keren Censor{-}Hillel, and Janne~H. Korhonen,
  editors, {\em {PODC} '21: {ACM} Symposium on Principles of Distributed
  Computing, Virtual Event, Italy, July 26-30, 2021}, pages 283--293. {ACM},
  2021.

\bibitem{Balliu0KO22}
Alkida Balliu, Sebastian Brandt, Fabian Kuhn, and Dennis Olivetti.
\newblock Distributed {$\Delta$}-coloring plays hide-and-seek.
\newblock In Stefano Leonardi and Anupam Gupta, editors, {\em {STOC} '22: 54th
  Annual {ACM} {SIGACT} Symposium on Theory of Computing, Rome, Italy, June 20
  - 24, 2022}, pages 464--477. {ACM}, 2022.

\bibitem{BalliuBO22}
Alkida Balliu, Sebastian Brandt, and Dennis Olivetti.
\newblock Distributed lower bounds for ruling sets.
\newblock {\em {SIAM} J. Comput.}, 51(1):70--115, 2022.

\bibitem{BalliuGKO23}
Alkida Balliu, Mohsen Ghaffari, Fabian Kuhn, and Dennis Olivetti.
\newblock Node and edge averaged complexities of local graph problems.
\newblock {\em Distributed Comput.}, 36(4):451--473, 2023.

\bibitem{Bar-YehudaCS17}
Reuven Bar{-}Yehuda, Keren Censor{-}Hillel, and Gregory Schwartzman.
\newblock A distributed {(2} + {\(\epsilon\)})-approximation for vertex cover
  in o(log {\(\Delta\)} / {\(\epsilon\)} log log {\(\Delta\)}) rounds.
\newblock {\em J. {ACM}}, 64(3):23:1--23:11, 2017.

\bibitem{BarenboimE09}
Leonid Barenboim and Michael Elkin.
\newblock Distributed (delta+1)-coloring in linear (in delta) time.
\newblock In Michael Mitzenmacher, editor, {\em Proceedings of the 41st Annual
  {ACM} Symposium on Theory of Computing, {STOC} 2009, Bethesda, MD, USA, May
  31 - June 2, 2009}, pages 111--120. {ACM}, 2009.

\bibitem{BarenboimE10}
Leonid Barenboim and Michael Elkin.
\newblock Sublogarithmic distributed {MIS} algorithm for sparse graphs using
  nash-williams decomposition.
\newblock {\em Distributed Comput.}, 22(5-6):363--379, 2010.

\bibitem{2013Barenboim}
Leonid Barenboim and Michael Elkin.
\newblock {\em Distributed Graph Coloring: Fundamentals and Recent
  Developments}.
\newblock Synthesis Lectures on Distributed Computing Theory. Morgan {\&}
  Claypool Publishers, 2013.

\bibitem{BarenboimEK14}
Leonid Barenboim, Michael Elkin, and Fabian Kuhn.
\newblock Distributed (delta+1)-coloring in linear (in delta) time.
\newblock {\em {SIAM} J. Comput.}, 43(1):72--95, 2014.

\bibitem{BarenboimEPS16}
Leonid Barenboim, Michael Elkin, Seth Pettie, and Johannes Schneider.
\newblock The locality of distributed symmetry breaking.
\newblock {\em J. {ACM}}, 63(3):20:1--20:45, 2016.

\bibitem{BONDY197497}
J.A Bondy and M~Simonovits.
\newblock Cycles of even length in graphs.
\newblock {\em Journal of Combinatorial Theory, Series B}, 16(2):97--105, 1974.

\bibitem{ColeV86}
Richard Cole and Uzi Vishkin.
\newblock Deterministic coin tossing with applications to optimal parallel list
  ranking.
\newblock {\em Inf. Control.}, 70(1):32--53, 1986.

\bibitem{GfellerV07}
Beat Gfeller and Elias Vicari.
\newblock A randomized distributed algorithm for the maximal independent set
  problem in growth-bounded graphs.
\newblock In Indranil Gupta and Roger Wattenhofer, editors, {\em Proceedings of
  the Twenty-Sixth Annual {ACM} Symposium on Principles of Distributed
  Computing, {PODC} 2007, Portland, Oregon, USA, August 12-15, 2007}, pages
  53--60. {ACM}, 2007.

\bibitem{Ghaffari16}
Mohsen Ghaffari.
\newblock An improved distributed algorithm for maximal independent set.
\newblock In Robert Krauthgamer, editor, {\em Proceedings of the Twenty-Seventh
  Annual {ACM-SIAM} Symposium on Discrete Algorithms, {SODA} 2016, Arlington,
  VA, USA, January 10-12, 2016}, pages 270--277. {SIAM}, 2016.

\bibitem{0001G23}
Mohsen Ghaffari and Christoph Grunau.
\newblock Faster deterministic distributed {MIS} and approximate matching.
\newblock In Barna Saha and Rocco~A. Servedio, editors, {\em Proceedings of the
  55th Annual {ACM} Symposium on Theory of Computing, {STOC} 2023, Orlando, FL,
  USA, June 20-23, 2023}, pages 1777--1790. {ACM}, 2023.

\bibitem{ecomposition}
Mohsen Ghaffari and Christoph Grunau.
\newblock Near-optimal deterministic network decomposition and ruling set, and
  improved {MIS}.
\newblock In {\em 65th {IEEE} Annual Symposium on Foundations of Computer
  Science, {FOCS} 2024, Chicago, IL, USA, October 27-30, 2024}, pages
  2148--2179. {IEEE}, 2024.

\bibitem{0001GHIR23}
Mohsen Ghaffari, Christoph Grunau, Bernhard Haeupler, Saeed Ilchi, and
  V{\'{a}}clav Rozhon.
\newblock Improved distributed network decomposition, hitting sets, and
  spanners, via derandomization.
\newblock In Nikhil Bansal and Viswanath Nagarajan, editors, {\em Proceedings
  of the 2023 {ACM-SIAM} Symposium on Discrete Algorithms, {SODA} 2023,
  Florence, Italy, January 22-25, 2023}, pages 2532--2566. {SIAM}, 2023.

\bibitem{GhaffariP19}
Mohsen Ghaffari and Julian Portmann.
\newblock Improved network decompositions using small messages with
  applications on mis, neighborhood covers, and beyond.
\newblock In Jukka Suomela, editor, {\em 33rd International Symposium on
  Distributed Computing, {DISC} 2019, October 14-18, 2019, Budapest, Hungary},
  volume 146 of {\em LIPIcs}, pages 18:1--18:16. Schloss Dagstuhl -
  Leibniz-Zentrum f{\"{u}}r Informatik, 2019.

\bibitem{HarrisSS18}
David~G. Harris, Johannes Schneider, and Hsin{-}Hao Su.
\newblock Distributed ({\(\Delta\)} +1)-coloring in sublogarithmic rounds.
\newblock {\em J. {ACM}}, 65(4):19:1--19:21, 2018.

\bibitem{Jiang_2020}
Tao Jiang and Liana Yepremyan.
\newblock Supersaturation of even linear cycles in linear hypergraphs.
\newblock {\em Combinatorics, Probability and Computing}, 29(5):698–721, June
  2020.

\bibitem{LBMatching}
Seri Khoury and Aaron Schild.
\newblock Round elimination via self-reduction: Closing gaps for distributed
  maximal matching.
\newblock Under submission, 2025.

\bibitem{Kuhn09}
Fabian Kuhn.
\newblock Weak graph colorings: distributed algorithms and applications.
\newblock In Friedhelm~Meyer auf~der Heide and Michael~A. Bender, editors, {\em
  {SPAA} 2009: Proceedings of the 21st Annual {ACM} Symposium on Parallelism in
  Algorithms and Architectures, Calgary, Alberta, Canada, August 11-13, 2009},
  pages 138--144. {ACM}, 2009.

\bibitem{abs-1011-5470}
Fabian Kuhn, Thomas Moscibroda, and Roger Wattenhofer.
\newblock Local computation: Lower and upper bounds.
\newblock {\em CoRR}, abs/1011.5470, 2010.

\bibitem{KuhnMW16}
Fabian Kuhn, Thomas Moscibroda, and Roger Wattenhofer.
\newblock Local computation: Lower and upper bounds.
\newblock {\em J. {ACM}}, 63(2):17:1--17:44, 2016.

\bibitem{LenzenW11}
Christoph Lenzen and Roger Wattenhofer.
\newblock {MIS} on trees.
\newblock In Cyril Gavoille and Pierre Fraigniaud, editors, {\em Proceedings of
  the 30th Annual {ACM} Symposium on Principles of Distributed Computing,
  {PODC} 2011, San Jose, CA, USA, June 6-8, 2011}, pages 41--48. {ACM}, 2011.

\bibitem{Linial92}
Nathan Linial.
\newblock Locality in distributed graph algorithms.
\newblock {\em {SIAM} J. Comput.}, 21(1):193--201, 1992.

\bibitem{Luby86}
Michael Luby.
\newblock A simple parallel algorithm for the maximal independent set problem.
\newblock {\em {SIAM} J. Comput.}, 15(4):1036--1053, 1986.

\bibitem{MetivierRSZ11}
Yves M{\'{e}}tivier, John~Michael Robson, Nasser Saheb{-}Djahromi, and Akka
  Zemmari.
\newblock An optimal bit complexity randomized distributed {MIS} algorithm.
\newblock {\em Distributed Comput.}, 23(5-6):331--340, 2011.

\bibitem{Naor91}
Moni Naor.
\newblock A lower bound on probabilistic algorithms for distributive ring
  coloring.
\newblock {\em {SIAM} J. Discret. Math.}, 4(3):409--412, 1991.

\bibitem{abs-2406-19430}
V{\'{a}}clav Rozhon.
\newblock Invitation to local algorithms.
\newblock {\em CoRR}, abs/2406.19430, 2024.

\bibitem{RozhonG20}
V{\'{a}}clav Rozhon and Mohsen Ghaffari.
\newblock Polylogarithmic-time deterministic network decomposition and
  distributed derandomization.
\newblock In Konstantin Makarychev, Yury Makarychev, Madhur Tulsiani, Gautam
  Kamath, and Julia Chuzhoy, editors, {\em Proceedings of the 52nd Annual {ACM}
  {SIGACT} Symposium on Theory of Computing, {STOC} 2020, Chicago, IL, USA,
  June 22-26, 2020}, pages 350--363. {ACM}, 2020.

\bibitem{SchneiderEW13}
Johannes Schneider, Michael Elkin, and Roger Wattenhofer.
\newblock Symmetry breaking depending on the chromatic number or the
  neighborhood growth.
\newblock {\em Theor. Comput. Sci.}, 509:40--50, 2013.

\bibitem{SchneiderW08}
Johannes Schneider and Roger Wattenhofer.
\newblock A log-star distributed maximal independent set algorithm for
  growth-bounded graphs.
\newblock In Rida~A. Bazzi and Boaz Patt{-}Shamir, editors, {\em Proceedings of
  the Twenty-Seventh Annual {ACM} Symposium on Principles of Distributed
  Computing, {PODC} 2008, Toronto, Canada, August 18-21, 2008}, pages 35--44.
  {ACM}, 2008.

\bibitem{Suomela13}
Jukka Suomela.
\newblock Survey of local algorithms.
\newblock {\em {ACM} Comput. Surv.}, 45(2):24:1--24:40, 2013.

\bibitem{wattenhofer2020mastering}
R.~Wattenhofer.
\newblock {\em Mastering Distributed Algorithms}.
\newblock Independently Published, 2020.

\bibitem{wattenhofer_podc}
Roger Wattenhofer.
\newblock Lecture notes on principles of distributed computing, chapter 4.
\newblock \url{https://disco.ethz.ch/courses/podc/lecturenotes/chapter4.pdf},
  2023.
\newblock ETH Zurich, Principles of Distributed Computing course.

\end{thebibliography}

\appendix

\section{Standard Exponent Approximation}

\begin{prop}\label{prop:exp-approx}
For any $\delta > 0$ and $x\in (0,1-\delta)$,

$$e^{-x-x^2/\delta} \le 1 - x \le e^{-x}$$
\end{prop}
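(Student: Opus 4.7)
The plan is to handle the two inequalities separately, since the right-hand one is classical and the left-hand one is the substantive bound.

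For the upper bound $1 - x \le e^{-x}$, I would use the standard tangent-line argument: the function $f(y) := e^{-y}$ is convex on $\mathbb{R}$, so it lies above its tangent line at $y = 0$, which is precisely $1 - y$. Evaluating at $y = x$ gives $e^{-x} \ge 1 - x$ for every real $x$, with no restriction needed on $\delta$.

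For the lower bound $e^{-x - x^2/\delta} \le 1 - x$, I would take logarithms (both sides are positive since $x < 1$) and reduce the claim to
\[
\ln(1-x) \ge -x - \frac{x^2}{\delta}.
\]
Using the Taylor expansion $\ln(1-x) = -\sum_{k=1}^{\infty} x^k/k$ (valid for $|x|<1$), this is equivalent to
\[
\sum_{k=2}^{\infty} \frac{x^k}{k} \le \frac{x^2}{\delta}.
\]
I would bound the left side by a geometric series: since $1/k \le 1$ for $k \ge 2$,
\[
\sum_{k=2}^{\infty} \frac{x^k}{k} \le \sum_{k=2}^{\infty} x^k = \frac{x^2}{1-x}.
\]
Combining with the hypothesis $x < 1 - \delta$, i.e., $1 - x > \delta$, yields $x^2/(1-x) < x^2/\delta$, which is exactly what we need.

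No step here is particularly delicate; the only thing to be careful about is ensuring that taking logarithms is legitimate (both sides positive, using $x \in (0,1-\delta) \subset (0,1)$) and that the geometric series converges, which again follows from $x < 1 - \delta < 1$. The main ``idea'' is just to notice that the extra $x^2/\delta$ term in the exponent is exactly what is required to absorb the tail $\sum_{k \ge 2} x^k/k$ under the hypothesis $x < 1-\delta$, which explains why the proposition is stated with this specific parameter $\delta$.
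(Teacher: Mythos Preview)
Your proposal is correct and follows essentially the same approach as the paper: both arguments take logarithms, use the Taylor expansion $\ln(1-x) = -\sum_{k\ge 1} x^k/k$, and control the tail by a geometric series exploiting $x < 1-\delta$. Your bound $\sum_{k\ge 2} x^k/k \le x^2/(1-x) < x^2/\delta$ is in fact slightly more direct than the paper's version (which separates out the $k=2$ term before bounding the remainder), and your convexity argument for the upper bound is a standard alternative to the paper's one-line use of the same series.
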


\begin{proof}
Apply $\ln$ to both sides. Recall that, for any $x\in (0,1)$,

$$\ln(1 - x) = -\sum_{i=1}^{\infty} \frac{x^i}{i}$$

Since $x > 0$, $\ln(1 - x) \le -x$, completing the upper bound. For the lower bound, use a geometric series upper bound. Specifically,

\begin{align*}
\ln(1 - x) &= -x - \frac{x^2}{2} - \sum_{i=3}^{\infty} \frac{x^i}{i}\\
&\ge -x - \frac{x^2}{2} - x^2 \sum_{i=3}^{\infty} (1 - \delta)^{i-2}\\
&= -x - \frac{x^2}{2} - x^2 (1 - \delta)/\delta\\
&\ge -x - \frac{x^2}{\delta}\\
\end{align*}

as desired.
\end{proof}

\section{Differed Proofs from Section~\ref{sec:full}}\label{app:diff}

We now give the proof of Lemma \ref{lem:ruling-survival}. In the proof, we use the following notation:  For a tuple of sets $(S^{(a)})_{a\in I}$ for some integer $a$ and set of integers $I$, define $|S| := \sum_{a\in I} |S^{(a)}|$.

\begin{proof}[Proof of Lemma~\ref{lem:ruling-survival}]
For each $t\in [T]$, define a set-valued random variable $W^{(t)}\subseteq X$ to be the set of all $v\in X$ for which both $t$ is a golden iteration for $v$ (i.e., $g_v^{(t)} = g_v^{(t-1)} + 1$) and $g_v^{(t)} \le T/50$. Let $\mathcal{Z}$ denote the family of all $T$-tuples of sets $(Z^{(t)})_{t=1}^T$, where $Z^{(t)} \subseteq X$ for all $t\in [T]$ and each member of $X$ appears in exactly $T/50$ different sets $Z^{(t)}$. Since choices of $(Z^{(t)})_{t=1}^T$ for $(W^{(t)})_{t=1}^T$ are mutually exclusive,

\begin{align*}
    &\Pr[\forall v\in X : v\in V(G^{(T)}) \text{ and } g_v^{(T)} \ge T/50]\\
    &= \sum_{(Z^{(t)})_{t=1}^T \in \mathcal{Z}} \Pr[(\forall v\in X : v\in V(G^{(T)})) \text{ and } (\forall t\in [T] : W^{(t)} = Z^{(t)})]
\end{align*}

For each $t\in \{0,1,2,\hdots,T-1,T\}$, let $\mathcal{P}^{(t)}$ denote the family of all $t$-tuples of sets $P^{(s)} \subseteq X$ for which each member of $X$ appears in at most $T/50$ different sets $P^{(s)}$. $\mathcal{P}^{(0)}$ is defined to be the singleton family consisting just of the empty tuple. We now show for all $t\in \{0,1,\hdots,T\}$ that

\begin{align*}
    &\Pr[\forall v\in X : v\in V(G^{(T)}) \text{ and } g_v^{(T)} \ge T/50]\\
    &\le \sum_{(P^{(s)})_{s=1}^t \in \mathcal{P}^{(t)}} \gamma^{|X|T/50 - |P|}\Pr[(\forall v\in X : v\in V(G^{(t)})) \text{ and } (\forall s\in [t] : W^{(s)} = P^{(s)})]
\end{align*}

We prove this by induction in descending order of $t$. For $t = T$, the result holds trivially because $\mathcal{Z} \subseteq \mathcal{P}^{(T)}$. For the inductive step, consider some $t \in \{0,1,\hdots,T-1\}$ and assume that the result holds for $t+1$. In particular, we are given that

\begin{align*}
&\Pr[\forall v\in X : v\in V(G^{(T)}) \text{ and } g_v^{(T)} \ge T/50]\\
&\le \sum_{(P^{(s)})_{s=1}^{t+1} \in \mathcal{P}^{(t+1)}} \gamma^{|X|T/50 - |P|}\Pr[(\forall v\in X : v\in V(G^{(t+1)})) \text{ and } (\forall s\in [t+1] : W^{(s)} = P^{(s)})]\\
&= \sum_{(P^{(s)})_{s=1}^{t+1} \in \mathcal{P}^{(t+1)}} \gamma^{|X|T/50 - |P|}\Pr[(\forall v\in X : v\in V(G^{(t+1)})) \text{ and } W^{(t+1)} = P^{(t+1)} \text{ and } (\forall s\in [t] : W^{(s)} = P^{(s)})]\\
\end{align*}

$W^{(s)}$ for all $s\in [t+1]$ is a function of the randomness only from the first $t$ iterations of the for loop. $G^{(t)}$ is also only a function of the randomness from the first $t$ iterations. Given $G^{(t)}$, the event $\{\forall v\in X : v\in V(G^{(t+1)})\}$, i.e. survival of all $X$ in iteration $t+1$, is only a function of the randomness from the $t+1$th iteration of the for loop. Recall that $\text{Inclusion}$ is the only source of randomness in the algorithm and that survival within round $t+1$ only depends on randomness in the 3-neighborhood of a given vertex. Thus, being distance $\ge 10$ apart ensures that neighborhoods are disjoint and that the events $\{v\in V(G^{(t+1)})\}$ are conditionally independent given $G^{(t)}$. Therefore,

\begin{align*}
&\Pr[(\forall v\in X : v\in V(G^{(t+1)})) \mid W^{(t+1)} = P^{(t+1)} \text{ and } (\forall s\in [t] : W^{(s)} = P^{(s)})]\\
&= \E_{G^{(t)}}\left[\Pr\left[(\forall v\in X : v\in V(G^{(t+1)})) \mid G^{(t)}\right] \mid (\forall s\in [t+1] : W^{(s)} = P^{(s)})\right]\\
&= \E_{G^{(t)}}\left[\prod_{v\in X}\Pr\left[v\in V(G^{(t+1)}) \mid G^{(t)}\right] \mid (\forall s\in [t+1] : W^{(s)} = P^{(s)})\right]\\
&\le \E_{G^{(t)}}\left[\prod_{v\in P^{(t+1)}}\Pr\left[v\in V(G^{(t+1)}) \mid G^{(t)}\right] \mid (\forall s\in [t+1] : W^{(s)} = P^{(s)})\right]\\
&\le \E_{G^{(t)}}\left[\gamma^{|P^{(t+1)}|} \mid (\forall s\in [t+1] : W^{(s)} = P^{(s)})\right]\\
&= \gamma^{|P^{(t+1)}|}
\end{align*}

Since $W^{(t+1)}$ is defined if and only if $X\subseteq V(G^{(t)})$,

\begin{align*}
&\sum_{(P^{(s)})_{s=1}^{t+1} \in \mathcal{P}^{(t+1)}} \gamma^{|X|T/50 - |P|}\Pr[(\forall v\in X : v\in V(G^{(t+1)})) \text{ and } W^{(t+1)} = P^{(t+1)} \text{ and } (\forall s\in [t] : W^{(s)} = P^{(s)})]\\
&\le \sum_{(P^{(s)})_{s=1}^{t+1} \in \mathcal{P}^{(t+1)}} \gamma^{|X|T/50 - |P| + |P^{(t+1)}|}\Pr[W^{(t+1)} = P^{(t+1)} \text{ and } (\forall s\in [t] : W^{(s)} = P^{(s)})]\\
&= \sum_{(P^{(s)})_{s=1}^t \in \mathcal{P}^{(t)}} \gamma^{|X|T/50 - |P|}\Pr[(\forall v\in X : v\in V(G^{(t)})) \text{ and } (\forall s\in [t] : W^{(s)} = P^{(s)})]\\
\end{align*}

completing the induction step, where the last equality uses the exclusivity of choices for $P^{(t+1)}$. Thus, plugging in $t = 0$ shows that

$$\Pr[\forall v\in X : v\in V(G^{(T)}) \text{ and } g_v^{(T)} \ge T/50] \le \gamma^{|X|T/50}$$

as desired.
\end{proof}

Next, we prove Lemma \ref{lem:shattering-small-components}. 

\begin{proof}[Proof of Lemma~\ref{lem:shattering-small-components}]
Given a decomposition of a graph into clusters $\{C_i\}_{i}$, the clustering graph is defined to be the graph where each cluster $C_i$ is a node, and each two nodes $C_i$ and $C_j$ are connected if there is an edge from some node $u\in C_i$ to some node $v\in C_j$ in the original graph. An $(r,c)$ network decomposition is a decomposition of the graph into clusters where the radius of each cluster is at most $r$ and clustering graph has chromatic number at most $c$. Given an $(r,c)$ network decomposition for $H$, one can find an MIS in $H$ in $O(r\cdot c)$ rounds by going over the color classes one at a time, and each time find an MIS for the clusters of the current color class (after deleting all the nodes found in previous independent sets together with their neighbors).

Hence, to prove the lemma, we show that $H$ admits an $\left(O(\log\log^2 n), O(\log\log n)\right)$ network decomposition that can be found efficiently. For this, we find a $(10, 100\log\log n)$-ruling set $R$ of $H$ by using the algorithm of~\cite{GfellerV07,SchneiderEW13}, which takes $O(\log\log n)$ rounds.\footnote{An $(\alpha,\beta)$-ruling set is a set of nodes where any two nodes in the set are at distance at least $\alpha$ from each other, and any other node is at distance at most $\beta$ from some node in the ruling set.} Then, we make a cluster $C_v$ around each node $v\in R$, where each node $u \in V(H)$ joins the cluster of the closest $v\in R$, with ties broken arbitrarily. Let $H_R$ be the clustering graph corresponding to the clusters $\{C_v\}_{v\in R}$. To show that $H$ admits an $\left(O(\log\log^2 n), O(\log\log n)\right)$ network decomposition, it suffices to show that $H_R$ admits an $(O(\log\log n),O(\log\log n))$-network decomposition (and that it can be found efficiently together with the coloring of the clusters). Indeed, given such an $(O(\log\log n),O(\log\log n))$-network decomposition $\mathcal{C}_R$ for $H_R$ we can obtain a clustering $\mathcal{C}$ for $H$ as follows. Recall that each cluster in $\mathcal{C}_R$ is a union of riling-set clusters. For each $C'\in \mathcal{C}_R$, we define a cluster $C$ in $H$ that contains all the nodes in the union of all the ruling-set clusters in $C$. This implies an $O(\log\log n)^2$-radius, $O(\log\log n)$-color network decomposition $\mathcal{C}$ of $H$.

Thus, it remains to show that $H_R$ admits the desired $(O(\log\log n),O(\log\log n))$-network decomposition, and that it can be found efficiently. Recent algorithms show that in a graph with $n'$ nodes, an $(O(\log n'),O(\log n'))$-network decomposition can be found in $\tilde{O}(\log^2(n'))$ rounds deterministically.\footnote{Prior algorithms already imply $(O(\polylog n'),O(\polylog n'))$-network decomposition, which would suffice for our purposes. The very recent algorithm by~\cite{ecomposition} improves the polynomial in our $\poly(\log\log n)$ runtime for this lemma.} Hence, it suffices to show that each connected component of $H_R$ has at most $100\log_{\Delta} n$ vertices with high probability. Suppose, for the sake of contradiction, that there is a connected set $X\subseteq R$ with $|X| \ge 100\log_{\Delta} n$ in $H_R$. By construction of $H_R$, these vertices are also connected in $G^{(T)}$. We construct a set of nodes $Y\subseteq V(G^{(T)})$ as follows. First, we add all the nodes in $X$ to $Y$. Then, while there exists a vertex in $V(G^{(T)})$ with $G$-distance at least $10$ from some node in $Y$, we add this vertex to $Y$. The connectedness of $X$ in $G^{(T)}$ implies that any vertex in $Y$ is within $G$-distance 21 of some other vertex in $Y$. However, no such sets exist with high probability, because the number of such sets in $G$ is at most $n\Delta^{21|X|}$ and a union bound with Lemma \ref{lem:ruling-survival} implies that such a set exists with probability at most

$$\gamma^{|Y|T/50}n\Delta^{21|Y|} \le n\Delta^{-|Y|} \le n\Delta^{-|X|} \le n^{-50}$$

Thus, each connected component of $H_R$ has at most $O(\log_{\Delta} n)$ with high probability, as desired. 

\end{proof}

\section{Simple One-Round Lower Bound}\label{sec:lb}

In this section, we demonstrate the fundamental nature of the constant survival probability barrier when using previous approaches. As discussed in Section~\ref{sec:TO}, all randomized MIS algorithms in the literature, as well as ours, do not rely on vertex identifiers.

\begin{thm}\label{thm:lb}
    Let $G$ be a $\Delta$-regular graph without any constant-length cycles. Any randomized distributed one-round algorithm with no vertex IDs that always returns an independent set $\mathcal{I}$ has the property that each node is in $\mathcal{I}$ or is adjacent to a node $\mathcal{I}$ with probability at most $p=1 - 1/(2e) + 1/\Delta$.
\end{thm}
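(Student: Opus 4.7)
My plan is to use the absence of constant-length cycles to obtain conditional independence of neighbors' decisions, and then combine Jensen's inequality with a bound on the marginal joining probability $q := \Pr[v \in \mathcal{I}]$.

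First, I will set up the problem. Each node $v$ receives an iid random bit $X_v \sim U[0,1]$, and since the algorithm has no vertex IDs, its decision is a deterministic function $f(X_v,\{X_u\}_{u\sim v})$ that is symmetric in the neighbors' bits. By vertex-transitivity of the $\Delta$-regular graph, $q$ is the same for every node. The girth assumption ensures that, conditional on $X_v = x$, the events $\{u_i \in \mathcal{I}\}$ for $i=1,\dots,\Delta$ are mutually independent (the decisions of distinct $u_i$'s depend on disjoint sets of bits apart from $X_v$), each Bernoulli with common parameter $r(x) := \Pr[u_i \in \mathcal{I} \mid X_v = x]$. By symmetry, $\mathbb{E}[r(X_v)] = q$.

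Next, I will derive an exact formula for the deletion probability. Since $\{v \in \mathcal{I}\}$ and $\{\exists i:~u_i \in \mathcal{I}\}$ are disjoint by the IS property, the quantity of interest equals
\[
q + \mathbb{E}_{X_v}\!\left[1 - (1 - r(X_v))^\Delta\right].
\]
Applying Jensen's inequality to the convex function $t \mapsto (1-t)^\Delta$ then yields $\mathbb{E}[(1-r(X_v))^\Delta] \ge (1-q)^\Delta$, so the deletion probability is at most $q + 1 - (1-q)^\Delta$.

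The last and most delicate step is to bound $q$. My plan is to argue that $q \le 1/(\Delta+1)$, matching what Luby's classical rule achieves and what appears optimal among anonymous one-round strategies: intuitively, anonymity combined with the IS constraint restricts the only non-trivial joining rules to be those based on an extremal rank (smallest or largest) of $X_v$ within its local view of $\Delta+1$ independent uniforms, and any attempt to join at intermediate ranks creates positive-probability collisions between adjacent vertices. Once $q \le 1/(\Delta+1)$, using $(\Delta/(\Delta+1))^\Delta \ge 1/e$, one obtains
\[
p \le \frac{1}{\Delta+1} + 1 - \frac{1}{e} \le 1 - \frac{1}{2e} + \frac{1}{\Delta},
\]
completing the proof. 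The main obstacle is this last step: the IS constraint applied to a single edge only yields $q \le 1/2$, which is far too weak. To upgrade this I would carefully exploit both the deterministic $\{0,1\}$-valued symmetric structure of $f$ and the pointwise constraint $\phi(x,y)\,\phi(y,x) = 0$ almost everywhere, where $\phi(x,y) := \Pr[v \in \mathcal{I} \mid X_v = x, X_u = y]$ for a fixed neighbor $u$; these together should rule out any joining density exceeding $1/(\Delta+1)$.
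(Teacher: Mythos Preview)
Your setup and the Jensen step are fine, but the proof has a real gap at exactly the point you flag: you do not prove $q \le 1/(\Delta+1)$, and the route you sketch does not get there. The intuition that ``only extremal-rank rules are possible'' is false --- for instance the rule ``join if your rank is the median of the $\Delta+1$ ranks you see'' is a perfectly valid anonymous one-round IS rule and also achieves $q = 1/(\Delta+1)$ --- so no argument built on classifying the optimal $f$ will be clean. The constraint $\phi(x,y)\phi(y,x)=0$ is indeed the right object, but as you yourself observe, by itself it only yields $q\le 1/2$; to extract $1/(\Delta+1)$ you still need a new idea, and the proposal does not supply one.

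The missing idea is a symmetrization argument. Your constraint, unwound back to the level of $f$, says: if $f(x,S)=1$ and $y\in S$, then $f(y,T)=0$ for every $T\ni x$. This immediately implies that for any $(\Delta{+}1)$-set $S$, at most one $z\in S$ can have $f(z,S\setminus\{z\})=1$. Now sample $\Delta+1$ i.i.d.\ uniform ranks forming a set $S$ and then pick $z$ uniformly from $S$; the pair $(z,S\setminus\{z\})$ is distributed exactly as a vertex's local view, so the previous sentence gives $q\le 1/(\Delta+1)$ in one line. The paper's proof is essentially this same symmetrization, but applied to the conditional inclusion probability: it fixes one neighbor's rank $x$ and shows $r(x)\le 1/\Delta$ pointwise (Proposition~\ref{prop:hardness-inclusion}), which makes Jensen unnecessary and directly gives survival probability at least $(1-1/\Delta)^\Delta - q \ge 1/(2e)-1/\Delta$. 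Your Jensen route would also work once the symmetrization lemma is in place (and would even give the slightly sharper $1-1/e+1/(\Delta+1)$), but without that lemma the argument is incomplete.
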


This hardness result shows that all 1-round algorithms for independent set have constant vertex survival probability. This is in contrast with Lemma \ref{lem:weighted-logstar}, which shows that there is a 2-round algorithm for independent set with subconstant vertex survival probability.

To prove Theorem \ref{thm:lb}, we start viewing 1-round no-vertex-ID LOCAL algorithms for independent set in a simpler way. For any integer $k$, let $\mathcal{S}_k$ be the family of sets of numbers with size exactly $k$. A family of sets $\mathcal{V}\subseteq [0,1]\times \mathcal{S}_{\Delta}$ is called an \emph{independence family} if an only if for any set $S\in \mathcal{S}_{\Delta+1}$ and any distinct $u,v\in S$, either $(v,S\setminus \{v\})\notin \mathcal{V}$ or $(u,S\setminus \{u\})\notin \mathcal{V}$. For an independence family $\mathcal{V}$, let

$$P_{\mathcal{V}} := \Pr_{x\sim [0,1], \forall i\in [\Delta]: y_i\sim [0,1], S_i\sim \mathcal{S}_{\Delta-1}}[\forall i\in [\Delta] : (y_i,S_i\cup \{x\})\notin \mathcal{V} \text{ and } (x,\{y_1,y_2,\hdots,y_{\Delta}\})\notin \mathcal{V}]$$

In this probability, $x$ and all $y_i$s are sampled uniformly from $[0,1]$, and $S_i$ is sampled uniformly from $\mathcal{S}_{\Delta-1}$, all independently. This corresponds with sampling all randomness in a 2-neighborhood and assessing the probability that the vertex with rank $x$ survives. We now show the following:

\begin{prop}\label{prop:local-independence-equiv}
    All 1-round no-vertex-ID randomized LOCAL algorithms for independent set have vertex survival probability at least $P_{\mathcal{V}^*}$, where $\mathcal{V}^*$ is an independence family that minimizes $P_{\mathcal{V}^*}$.
\end{prop}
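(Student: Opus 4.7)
The plan is to encode any one-round no-vertex-ID randomized algorithm as a subset $\mathcal{V}\subseteq [0,1]\times \mathcal{S}_\Delta$ and then show two things: (i) the correctness requirement forces $\mathcal{V}$ to be an independence family, and (ii) when the algorithm is executed on the high-girth graph $G$, the survival probability at any vertex is exactly $P_{\mathcal{V}}$. Combining these immediately yields the desired bound survival probability $\geq P_{\mathcal{V}^*}$. To normalize, I would have each vertex $v$ sample a uniform rank $r_v\in [0,1]$ and broadcast it to its neighbors, folding in any private randomness (including whatever $v$ uses to decide its output) since $[0,1]$ has the cardinality of the continuum. Because the algorithm uses no vertex IDs, the output at $v$ is computed by a single anonymous function $f$ of $r_v$ and the multiset of its $\Delta$ neighbor ranks, which is a set almost surely under continuous sampling. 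A randomized choice of $f$ itself can be reduced to the deterministic case by convexity, since the overall survival probability is a mixture over deterministic $f$'s; setting $\mathcal{V}:=f^{-1}(1)$ gives the required encoding.

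Next I would verify that $\mathcal{V}$ is an independence family. For any $S\in \mathcal{S}_{\Delta+1}$ and distinct $u,v\in S$, consider the $\Delta$-regular complete graph $K_{\Delta+1}$ with rank profile $S$: the vertices carrying ranks $u$ and $v$ are adjacent with neighbor-rank sets $S\setminus \{u\}$ and $S\setminus \{v\}$ respectively. If both $(u,S\setminus \{u\})\in \mathcal{V}$ and $(v,S\setminus \{v\})\in \mathcal{V}$, then the two adjacent vertices would both join $\mathcal{I}$, contradicting the assumption that the algorithm always returns an independent set. This is precisely the independence-family condition.

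To compute the survival probability at a vertex $v$ of $G$, I would let $u_1,\ldots,u_\Delta$ denote its neighbors and $S_i$ denote the set of ranks of the $\Delta-1$ other neighbors of $u_i$. Because $G$ is $\Delta$-regular with no constant-length cycles, the vertices $v,u_1,\ldots,u_\Delta$ together with those contributing to the $S_i$'s are all distinct, so the ranks $x:=r_v$, $y_i:=r_{u_i}$, and the elements of the $S_i$'s are mutually independent and uniform on $[0,1]$. The event that $v$ is neither in $\mathcal{I}$ nor adjacent to a vertex of $\mathcal{I}$ is exactly
$$\{(x,\{y_1,\ldots,y_\Delta\})\notin \mathcal{V}\}\cap \bigcap_{i=1}^\Delta\{(y_i,S_i\cup \{x\})\notin \mathcal{V}\},$$
whose probability by definition is $P_{\mathcal{V}}$. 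Since $\mathcal{V}$ is an independence family, $P_{\mathcal{V}}\geq P_{\mathcal{V}^*}$, completing the argument.

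The main obstacle is the normalization step: carefully justifying that an arbitrary randomized algorithm---with arbitrary private randomness, arbitrary message content, and potentially randomized decision rules---reduces without loss of generality to a single deterministic anonymous decision function on (rank, neighbor rank set). The standard ``broadcast the entire random tape'' trick handles message content and private randomness, while convexity reduces a distribution over decision rules to a single worst-case one for a lower bound. The remaining subtle point is that the appeal to $K_{\Delta+1}$ in the independence-family argument uses correctness on a graph outside the girth class of Theorem~\ref{thm:lb}; this is admissible because a LOCAL algorithm ``that always returns an independent set'' is conventionally required to be correct on every input graph, not only on those satisfying the theorem's hypothesis.
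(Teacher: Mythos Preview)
Your proposal is correct and follows the same overall scheme as the paper: encode a one-round anonymous algorithm by its acceptance set $\mathcal{V}\subseteq[0,1]\times\mathcal{S}_\Delta$, check that $\mathcal{V}$ is an independence family, and identify the survival probability with $P_{\mathcal{V}}$.

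The one substantive difference is in how the independence-family condition is verified. You invoke $K_{\Delta+1}$ and then appeal to the convention that the algorithm must be correct on \emph{all} graphs; you correctly flag that this steps outside the girth hypothesis of Theorem~\ref{thm:lb}. The paper avoids this detour: it observes that for any $S\in\mathcal{S}_{\Delta+1}$ and distinct $u,v\in S$, the two views $(u,S\setminus\{u\})$ and $(v,S\setminus\{v\})$ can already occur on adjacent vertices in a $\Delta$-regular graph \emph{with no constant-length cycles}, simply by letting the $\Delta-1$ ``other'' neighbors of each endpoint carry the ranks in $S\setminus\{u,v\}$ (these are distinct vertices because of the girth assumption, but nothing prevents repeating ranks on vertices that are far apart). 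This keeps the argument entirely within the graph class of the theorem and does not require assuming correctness on low-girth inputs. Your version is fine under your stated interpretation of ``always returns an independent set,'' but the paper's construction is the cleaner way to close that gap.
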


\begin{proof}
    For every 1-round no-vertex-ID LOCAL algorithm $A$, there is an associated independence family $\mathcal{V}_A$ of 1-neighborhood views that lead to inclusion in the independent set. This family is an independence family because of the fact that $A$ always outputs an independent set, and $(u,S\setminus \{u\})$ and $(v,S\setminus \{v\})$ can be adjacent to each other in a $\Delta$-regular graph that has no constant-length cycles. $P_{\mathcal{V}_A}$ is the survival probability of a vertex (with rank $x$) when the algorithm $A$ is applied, as randomness tapes for different vertices are all distinct with probability 1. Minimizing over all valid algorithms $A$ gives the desired result.
\end{proof}

Next, we upper bound the inclusion probability of a 1-round algorithm, where we fix one rank as done in Lemma \ref{lem:weighted-logstar}:

\begin{prop}\label{prop:hardness-inclusion}
    For any $x\in [0,1]$ and any independence family $\mathcal{V}$,

    $$\Pr_{y\sim [0,1], S\sim \mathcal{S}_{\Delta-1}}[(y,S\cup \{x\}) \in \mathcal{V}] \le \frac{1}{\Delta}$$
\end{prop}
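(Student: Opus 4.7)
The plan is to exploit the defining property of an independence family together with the symmetry of the sampling distribution. Fix $x \in [0,1]$ and let $y, z_1, \dots, z_{\Delta-1}$ be i.i.d. uniform on $[0,1]$, with $S := \{z_1, \dots, z_{\Delta-1}\}$, so that the target probability is $\Pr[(y, S \cup \{x\}) \in \mathcal{V}]$. Set $T := \{x, y, z_1, \dots, z_{\Delta-1}\}$. Almost surely all $\Delta+1$ values are distinct, so $T \in \mathcal{S}_{\Delta+1}$ and $T \setminus \{y\} = S \cup \{x\}$.

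The key step is to invoke the independence family condition, which says that at most one element $t$ of any $(\Delta+1)$-set $T$ can have $(t, T \setminus \{t\}) \in \mathcal{V}$ (otherwise two such centers would witness a violation). Hence, pointwise,
\[
\sum_{t \in T} \mathbb{1}\bigl[(t, T \setminus \{t\}) \in \mathcal{V}\bigr] \le 1.
\]
Taking expectations and discarding the $t = x$ term gives
\[
\Pr\bigl[(y, T \setminus \{y\}) \in \mathcal{V}\bigr] + \sum_{i=1}^{\Delta-1} \Pr\bigl[(z_i, T \setminus \{z_i\}) \in \mathcal{V}\bigr] \le 1.
\]

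The final ingredient is exchangeability: $y, z_1, \dots, z_{\Delta-1}$ are i.i.d. (while $x$ is fixed and excluded from this symmetry), so all $\Delta$ terms on the left-hand side are equal. Dividing by $\Delta$ yields $\Pr[(y, T \setminus \{y\}) \in \mathcal{V}] \le 1/\Delta$, which is exactly the desired bound after rewriting $T \setminus \{y\} = S \cup \{x\}$. There is no real obstacle here; the only subtlety is to notice that the independence family condition must be applied to the $(\Delta+1)$-set $T$ that augments the $\Delta$-neighborhood by one more "would-be neighbor," and that the probability distribution treats $x$ asymmetrically from the other $\Delta$ ranks, which is precisely why one loses a factor of $\Delta$ (rather than $\Delta+1$) in the denominator.
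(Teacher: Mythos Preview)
Your proof is correct and follows essentially the same approach as the paper's. Both arguments form the $(\Delta+1)$-set containing $x$ together with $\Delta$ i.i.d.\ uniform ranks, use the independence-family property to conclude that at most one of these ranks can serve as a center, and then exploit the exchangeability of the $\Delta$ non-$x$ ranks; the paper phrases this last step as ``sample $T$ first, then pick $y$ uniformly from $T$,'' whereas you phrase it via an indicator sum and linearity of expectation, but these are equivalent.
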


\begin{proof}
    We can alternatively sample the pair $(y,S)$ by (a) sampling $T$ uniformly from $\mathcal{S}_{\Delta}$, (b) sampling $y$ uniformly from $T$, and (c) defining $S := T\setminus \{y\}$. Since $\mathcal{V}$ is an independence family, there exists at most one $z\in T$ for which $(z, T\cup \{x\}\setminus \{z\})\in \mathcal{V}$. Since $y$ is sampled uniformly,

    $$\Pr_{y\sim T_0}[(y,S\cup \{x\}) \in \mathcal{V} \mid T = T_0] \le \frac{1}{\Delta}$$

    for any $T_0\in \mathcal{S}_{\Delta}$. Taking the expectation over all possible $T_0$ shows that

    $$\Pr_{T,y}[(y,S\cup \{x\}) \in \mathcal{V}] \le \frac{1}{\Delta}$$

    as desired.
\end{proof}

We can now use independence to get the desired lower bound on survival probability:

\begin{proof}[Proof of Theorem \ref{thm:lb}]
Define $\mathcal{V}^*$ as in Proposition \ref{prop:local-independence-equiv}. For any $x\in [0,1]$,

$$\Pr_{\forall i\in [\Delta] : y_i\sim [0,1], S_i\sim \mathcal{S}_{\Delta-1}}[\forall i\in [\Delta] : (y_i, S_i)\notin \mathcal{V}^*] \ge (1 - 1/\Delta)^{\Delta} \ge 1/(2e)$$

by independence and Proposition \ref{prop:hardness-inclusion}. Taking the expectation over $x$ shows that

$$\Pr_{x\sim [0,1], \forall i\in [\Delta] : y_i\sim [0,1], S_i\sim \mathcal{S}_{\Delta-1}}[\forall i\in [\Delta] : (y_i, S_i)\notin \mathcal{V}^*] \ge 1/(2e)$$

Taking the expectation over $x$ in Proposition \ref{prop:hardness-inclusion} and replacing $y$ from the proposition with $x$ from this proof shows that

$$\Pr_{x\sim [0,1], \forall i\in [\Delta] : y_i\sim [0,1]}[(x, \{y_1,y_2,\hdots,y_{\Delta}\})\in \mathcal{V}^*] \le \frac{1}{\Delta}$$

Therefore, by a union bound, $P_{\mathcal{V^*}}\ge 1/(2e) - 1/\Delta$. Proposition \ref{prop:local-independence-equiv} yields the desired result.
\end{proof}

\end{document}